\newcommandx{\unsure}[2][1=]{\todo[linecolor=green,backgroundcolor=green!25,bordercolor=green,#1]{\normalsize #2}}
\newcommand{\eps}{\varepsilon}
\newcommand{\qeds}{$\blacksquare$}
\newcommand{\F}{\mathbb{F}}
\renewcommand{\S}{\mathcal{S}}
\DeclareMathOperator{\col}{col}
\DeclareMathOperator{\rank}{rank}
\newcommand{\cw}{{\operatorname{ctw}}} 
\newcommand{\ctw}{{\operatorname{ctw}}}
\newcommand{\tw}{{\operatorname{tw}}}
\newcommand{\pw}{{\operatorname{pw}}}
\DeclareMathOperator{\supp}{supp}
\title{Tight bounds for counting colorings and connected edge sets parameterized by cutwidth\footnote{All authors are supported by the project CRACKNP that has received funding from the European Research Council (ERC) under the European Union's Horizon 2020 research and 	innovation programme (grant agreement No 853234).}}
\author{Carla Groenland}{Utrecht University, The
    Netherlands }{c.e.groenland@uu.nl}{orcid}{}
\author{Isja Mannens}{Utrecht University, The
    Netherlands }{i.m.e.mannens@uu.nl}{orcid}{}
\author{Jesper Nederlof}{Utrecht University, The
    Netherlands }{j.nederlof@uu.nl}{orcid}{}
\author{Krisztina Szil\'agyi}{Utrecht University, The
    Netherlands }{k.szilagyi@uu.nl}{orcid}{}
\authorrunning{C. Groenland, I. Mannens, J. Nederlof and K. Szil\'agyi} 
\keywords{connected edge sets, cutwidth, parameterized algorithms, colorings, counting modulo $p$} 
\begin{document}

\maketitle

\begin{abstract}
We study the fine-grained complexity of counting the number of colorings and connected spanning edge sets parameterized by the cutwidth and treewidth of the graph. While decompositions of small treewidth decompose the graph with small vertex separators, decompositions with small cutwidth decompose the graph with small \emph{edge} separators.

Let $p,q \in \mathbb{N}$ such that $p$ is a prime and $q \geq 3$. We show:
\begin{itemize}
	\item If $p$ divides $q-1$, there is a $(q-1)^{\cw}n^{O(1)}$ time algorithm for counting list $q$-colorings modulo $p$ of $n$-vertex graphs of cutwidth $\textup{ctw}$. Furthermore, there is no $\varepsilon>0$ for which there is a $(q-1-\varepsilon)^{\textup{ctw}} n^{O(1)}$ time algorithm that counts the number of list $q$-colorings modulo $p$ of $n$-vertex graphs of cutwidth ${\textup{ctw}}$, assuming the Strong Exponential Time Hypothesis (SETH).
	\item If $p$ does not divide $q-1$, there is no $\varepsilon>0$ for which there exists a $(q-\varepsilon)^{\textup{ctw}} n^{O(1)}$ time algorithm that counts the number of list $q$-colorings modulo $p$ of $n$-vertex graphs of cutwidth ${\textup{ctw}}$, assuming SETH.
\end{itemize}
The lower bounds are in stark contrast with the existing $2^{\ctw}n^{O(1)}$ time algorithm to compute the chromatic number of a graph by Jansen and Nederlof~[Theor. Comput. Sci.'18].

Furthermore, by building upon the above lower bounds, we obtain the following lower bound for counting connected spanning edge sets: there is no $\varepsilon>0$ for which there is an algorithm that, given a graph $G$ and a cutwidth ordering of cutwidth $\cw$, counts the number of spanning connected edge sets of $G$ modulo $p$ in time $(p - \varepsilon)^\ctw n^{O(1)}$, assuming SETH. We also give an algorithm with matching running time for this problem.

Before our work, even for the treewidth parameterization, the best conditional lower bound by Dell et al.~[ACM Trans. Algorithms'14] only excluded $2^{o({\textup{tw}})}n^{O(1)}$ time algorithms for this problem.

Both our algorithms and lower bounds employ use of the matrix rank method, by relating the complexity of the problem to the rank of a certain `compatibility matrix' in a non-trivial way. 
\begin{picture}(0,0)
\put(-20,-250)
{\hbox{\includegraphics[width=40px]{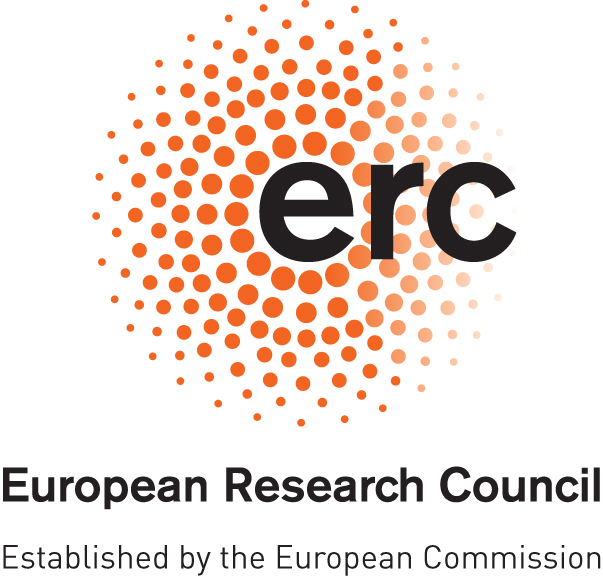}}}
\put(-30,-240)
{\hbox{\includegraphics[width=60px]{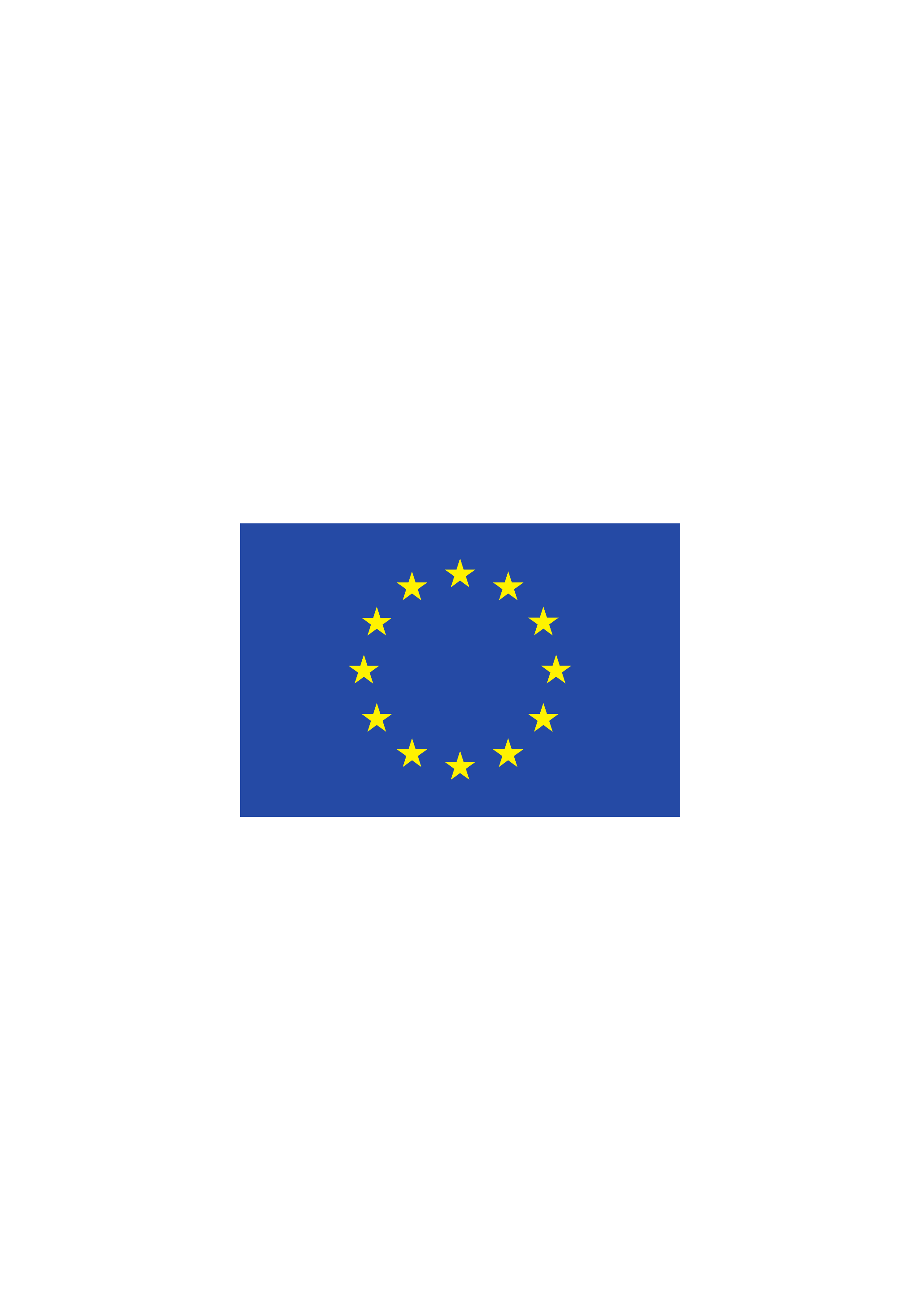}}}
\end{picture}
\end{abstract}
\newpage
\setcounter{page}{1}
\section{Introduction}
A popular topic of interest in (fine-grained) algorithmic research is to determine the decomposability of NP-hard problems in easier subproblems.
A natural decomposition strategy is often implied by decomposing the solution into sub-solutions induced by a given decomposition of the input graph such as tree decompositions, path decompositions, or tree depth decompositions, independent of the problem to be solved. However, the efficiency of such a decomposition can wildly vary per computational problem. Recently, researchers developed tools that allow them to get a precise understanding of this efficiency : non-trivial algorithmic tools (such as convolutions and the cut-and-count method~\cite{cygan2011solving, RooijBR09}) were developed to give algorithms that have an optimal running time conditioned on hypotheses such as the Strong Exponential Time Hypothesis (SETH)~\cite{ImpagliazzoP01}.
While the efficiency of such decompositions has been settled for most decision problems parameterized by treewidth, many other interesting settings remain elusive.
Two of them are \emph{cutwidth} and \emph{counting problems}.

The \emph{cutwidth} of an ordering of the vertices of the graph is defined as the maximum number of edges with exactly one endpoint in a prefix of the ordering (where the maximum is taken over all prefixes of the ordering). The \emph{cutwidth} of a graph is defined to be the minimum width over all its cutwidth orderings. Cutwidth is very similar to pathwidth, except that cutwidth measures the number of edges of a cut, while the pathwidth measures the number of \emph{endpoints} of edges over the cut.
Thus the cutwidth of a graph is always larger than its pathwidth. But for some problems a decomposition scheme associated with a cutwidth ordering of cutwidth $k$ can be used much more efficiently than a decomposition of pathwidth $k$.
A recent example of such a problem is the $q$-coloring problem:\footnote{Recall that a $q$-coloring is a mapping from the vertices of the graph to $\{1,\ldots,q\}$ such that every two adjacent vertices receive distinct colors, and the $q$-coloring problem asks whether a $q$-coloring exists.} While there is a $(q-\varepsilon)^\pw$ lower bound~\cite{LokshtanovMS18} assuming SETH, there is a $2^\ctw n^{O(1)}$ time randomized algorithm~\cite{jansen2019computing}.

\emph{Counting problems} pose an interesting challenge if we want to study their decomposability.
Counting problems are naturally motivated if we are interested in any statistic rather than just existence of the solutions space. While often a counting problem behaves very similarly to its decision version (as in, the dynamic programming approach can be fairly directly extended to solve the counting version as well), for some problems there is a rather puzzling increase in complexity when going from the decision version to the counting version.
\footnote{Two examples herein are detecting/counting perfect matchings (while the decision version is in $P$, the counting version can not be solved in time $(2-\eps)^\tw n^{O(1)}$ for any $\eps >0$ assuming the SETH~\cite{CurticapeanM16}) and Hamiltonian cycles (while the decision version can be solved in $(2+\sqrt{2})^\pw$ time~\cite{CyganKN18}, the counting version can not be solved in time $(6-\eps)^\tw n^{O(1)}$ for any $\eps >0$ assuming the SETH~\cite{CurticapeanLN18}).}

One of the most central problems in counting complexity is the evaluation of the \emph{Tutte polynomial}. The strength of this polynomial is that it expresses all graph invariants that can be written as a linear recurrence using only the edge deletion and contraction operation \cite{oxley1979tutte}, and its evaluations specialize to a diverse set of parameters ranging from the number of forests, nowhere-0 flows, $q$-colorings and spanning connected edge sets.

An interesting subdirection within counting complexity that is in between the decision and counting version and that we will also address in this paper is \emph{modular counting}, where we want to count the number of solutions modulo a number $p$. This is an interesting direction since the complexity of the problem at hand can wildly vary for different $p$ (see~\cite{Valiant06} for a famous example), but in the setting of this paper it is also naturally motivated: For example, the cut-and-count method achieves the fastest algorithms for several decision problems by actually solving the modular counting variant instead.

\subsection{Our results}
In this paper we study the complexity of two natural hard (modular) counting problems: Counting the number of $q$-colorings of a graph and counting the number of spanning connected edge sets, parameterized by the cutwidth of the graph.
\subparagraph{Counting Colorings}
Let $G$ be a graph and suppose that for each $v \in V$ we have an associated list $L(v) \subseteq \{1,\ldots,q\}$. A \emph{list $q$-coloring} is a coloring $c$ of $G$ such that $c(v) \in L(v)$ for each $v \in V$. Two colorings are \emph{essentially distinct} if they cannot be obtained from each other by permuting the color classes. Since the number of essentially distinct colorings is $q!$ times the number of distinct colorings (assuming the chromatic number of the graph is $q$), counting colorings modulo $p$ may become trivial if $p \leq  q$. For this reason, we focus on counting essentially distinct colorings in our lower bounds.

In this paper, we will focus on counting list $q$-colorings modulo a prime number $p$.
Our main theorem reads as follows:

\begin{theorem}
	\label{thm:main}
Let $p,q \in \mathbb{N}$ with $p$ prime and $q \geq 3$. 
\begin{itemize}
	\item If $p$ divides $q-1$, then there is a $(q-1)^{\cw}n^{O(1)}$ time algorithm for counting list $q$-colorings modulo $p$ of $n$-vertex graphs of cutwidth $\cw$. Furthermore, there is no $\varepsilon>0$ for which there exists a $(q-1-\varepsilon)^\cw n^{O(1)}$ time algorithm that counts the number of essentially distinct $q$-colorings modulo $p$ in time $(q-1-\varepsilon)^\cw n^{O(1)}$, assuming SETH.
	\item If $p$ does not divide $q-1$, there is no $\varepsilon>0$ for which there exists a $(q-\varepsilon)^\cw n^{O(1)}$ time algorithm that counts the number of essentially distinct $q$-colorings modulo $p$, assuming SETH.
\end{itemize}
\end{theorem}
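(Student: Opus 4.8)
The plan is to route all three assertions — the algorithm for $p\mid q-1$ and the two SETH lower bounds — through the $\mathbb{F}_p$-rank of the single-edge \emph{inequality matrix} $B_q \in \mathbb{F}_p^{[q]\times[q]}$ with $B_q[a,b]=[a\neq b]$, i.e.\ $B_q=J_q-I_q$. From the polynomial identity $\det(J_q-I_q)=(-1)^{q-1}(q-1)$ the matrix $B_q$ is singular over $\mathbb{F}_p$ exactly when $p\mid q-1$, and since $B_q$ acts as multiplication by $-1$ (a unit, as $p$ is prime) on the hyperplane $\{x:\sum_i x_i=0\}$, we get $\rank_{\mathbb{F}_p}(B_q)=q-1$ if $p\mid q-1$ and $\rank_{\mathbb{F}_p}(B_q)=q$ otherwise. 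Writing $r$ for this rank, the whole theorem becomes: counting list $q$-colourings modulo $p$ of graphs of cutwidth $\ctw$ is doable in $r^{\ctw}n^{O(1)}$ time but not in $(r-\eps)^{\ctw}n^{O(1)}$ time under SETH, where the algorithm is claimed explicitly only for $r=q-1$ (the case $r=q$ being handled by the naive dynamic program tracking all $\le q^{\ctw}$ colourings of the boundary).

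\textbf{The algorithm ($p\mid q-1$).} I would process a cutwidth ordering $v_1,\dots,v_n$ from left to right, maintaining after step $i$ the table $T_i\colon\col(B_i)\to\mathbb{F}_p$, where $B_i\subseteq\{v_1,\dots,v_i\}$ is the set of left endpoints of edges crossing the $i$-th cut and $T_i(\alpha)$ counts, modulo $p$, the proper list colourings of $G[\{v_1,\dots,v_i\}]$ that restrict to $\alpha$; the output is $T_n(\emptyset)$. The key point is that the \emph{compatibility matrix} $C_i$ of the $i$-th cut — rows indexed by colourings of the left boundary, columns by colourings of the right boundary, entry equal to the product over crossing edges $uv$ of $[\alpha(u)\neq\beta(v)]$ — is the Hadamard product over the at most $\ctw$ crossing edges of matrices each equal to $B_q$ lifted to the full index set, hence of $\mathbb{F}_p$-rank $q-1$; by $\rank(A\odot B)\le\rank(A)\rank(B)$ this gives $\rank_{\mathbb{F}_p}(C_i)\le(q-1)^{\ctw}$. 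Standard rank-based dynamic programming then lets us store, instead of $T_i$, the vector $Q_iT_i\in\mathbb{F}_p^{\rho_i}$ for a fixed $Q_i$ whose rows form a basis of the row space of $C_i$, so $\rho_i\le(q-1)^{\ctw}$; choosing $Q_i$ with a Kronecker structure respecting the per-edge factorisation makes the step $i\to i+1$ transition (insert $v_{i+1}$ and its backward edges) computable in $(q-1)^{\ctw}n^{O(1)}$ time.

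\textbf{The lower bounds.} Both are SETH reductions from CNF-SAT resting on the converse use of the rank computation: a single edge already realises a gadget whose associated matrix has $\mathbb{F}_p$-rank \emph{exactly} $r$, and chaining $t$ such gadgets realises, in the relevant quotient, a relation of rank $r^t$ — enough freedom to simulate an essentially arbitrary constraint on $r^t$ ``states''. On top of this I would build the usual SETH gadget graph: group the $n$ variables into blocks, encode each block's $2^b$ assignments into a bundle of roughly $b/\log_2 r$ crossing edges, string the blocks along a sequence of clause-checking gadgets, and arrange the bookkeeping so that the number of essentially distinct colourings modulo $p$ reveals whether $\varphi$ is satisfiable. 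A hypothetical $(r-\eps)^{\ctw}n^{O(1)}$ algorithm would then decide SAT in time $(r-\eps)^{n\log_r 2\,(1+o(1))}=2^{(1-\delta)n}$ for some $\delta>0$, contradicting SETH; instantiating $r=q-1$ and $r=q$ gives the two cases.

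\textbf{Main obstacles.} I expect two points to carry the weight. First, \emph{tightness of the rank}: the algorithm only needs $\rank_{\mathbb{F}_p}(C_i)\le(q-1)^{\ctw}$, but the lower bound needs that $r$ cannot be beaten — that the edge gadget genuinely has rank $r$ and that the $t$-fold chained gadget cannot be simulated with fewer than $r^t$ states — a matching lower bound on rank that is exactly where primality of $p$ and the dichotomy $p\mid q-1$ versus $p\nmid q-1$ become decisive. Second, making the \emph{modular} count faithful: in the natural construction the number of essentially distinct colourings equals the number of satisfying assignments, which can vanish modulo $p$ even for satisfiable $\varphi$, so one must either isolate (reduce instead from Unique-SAT, which is SETH-hard for randomised algorithms) or insert a dedicated gadget that collapses the count to a $0/1$-valued quantity modulo $p$. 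Pinning down the cutwidth of the resulting graph to within a $1+o(1)$ factor of $n\log_r 2$ is the remaining, more routine, bookkeeping.
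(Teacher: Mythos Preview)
Your high-level picture --- the complexity is governed by $r=\rank_{\mathbb{F}_p}(J_1)\in\{q-1,q\}$ --- is exactly the paper's framing, but both the algorithm and the lower bound are executed rather differently, and your lower-bound plan has a real gap.

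For the algorithm, the paper does not maintain an abstract Kronecker/Hadamard factorisation of the cut matrix. Instead it introduces \emph{reduced vertices}: a boundary vertex $v$ of degree~$1$ in the cut can be ``reduced'' by replacing $T[x]$ with $T[x]-T[x']$ (where $x'$ recolours $v$ to $q$) and zeroing the $x(v)=q$ entries; this preserves $M$-representativity precisely because $q-2\equiv_p-1$. Higher-degree boundary vertices cannot be reduced, but a counting argument shows there are at most $(\ctw-|R|)/2+1$ of them, so the live table has size at most $(q-1)^{|R|}q^{(\ctw-|R|)/2+1}<q(q-1)^{\ctw}$. Your Kronecker-structured transition is plausible, but the step where several cut edges share a boundary vertex is exactly where it gets delicate, and you leave that unsaid.

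For the lower bound, the paper sidesteps your ``modular count faithful'' obstacle by changing the source problem: it reduces from $\#_p\textsc{CSP}(q,r)$, whose $(q-\eps)^n$ SETH-hardness is imported from prior work via a \emph{count-preserving} SAT-to-CSP reduction, so no isolation is needed. The graph is then built from two gadgets: a general one realising any prescribed function $f:[q]^k\to\mathbb{N}$ as an extension count, and an information-transfer gadget across each column whose extension counts are the entries of $J_1^{-1}$ over $\mathbb{F}_p$ --- which exists exactly when $p\nmid q-1$. The $p\mid q-1$ bound is then derived, as you say, by running the argument with $q-1$ colours (since $p\nmid q-2$); a final clique-attachment step passes from list colourings to essentially distinct colourings. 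Your direct-from-SAT route with Unique-SAT would yield only a randomised lower bound, and the unspecified ``collapse gadget'' is precisely the missing piece that the $\#_p\textsc{CSP}$ starting point makes unnecessary.
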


Thus, we show that under the cutwidth parameterization, the (modular) counting variant of $q$-coloring is much harder than the decision, as the latter can be solved in $2^\cw n^{O(1)}$ time with a randomized algorithm~\cite{jansen2019computing}.
Additionally, we show there is a curious jump in complexity based on whether $p$ divides $q-1$ or not: Since our bounds are tight, this jump is inherent to the problem and not an artifact of our proof.

The proof strategy of all items of Theorem~\ref{thm:main} relates the complexity of the problems to a certain \emph{compatibility matrix}. This is a Boolean matrix that has its rows and columns indexed by partial solutions, and has a $1$ if and only if the corresponding partial solutions combine into a global solution.
In previous work, it was shown that the rank of this matrix can be used to design both algorithms~\cite{BodlaenderCKN15,CyganKN18,jansen2019computing, Nederlof20} and lower bounds~\cite{CurticapeanLN18,CyganKN18}.
 
With this in mind, the curious jump can intuitively be explained as follows. Consider the base case where the graph is a single edge and we decompose a (list) $q$-coloring into the two colorings induced on the vertices. The compatibility matrix corresponding to this decomposition is the complement of an $q \times q$ identity matrix. This matrix has full rank if $p$ does not divide $q-1$ and it has rank $q-1$ otherwise. We believe this is a very clean illustration of the rank based methods, since it explains a curious gap that would be rather mysterious without the rank based perspective.

\subparagraph{Connected Spanning Edge Sets and Tutte polynomial}
 We say that $X\subseteq E$ is a \emph{connected spanning edge set} if $G[X]$ is connected and every vertex is adjacent to an edge in $X$. Our second result is about counting the number of such sets. This problem is naturally motivated: It gives the probability that a random subgraph remains connected, and is an important special case of the Tutte polynomial.
 We determine the complexity of counting connected spanning edge sets by treewidth and cutwidth by giving matching lower and upper bounds:
\begin{theorem} 
\label{thm:main_upperboundcse}
\label{thm:main_lbcse}
Let $p$ be a prime number. There is an algorithm that counts the number of connected edge sets modulo $p$ of $n$-vertex graphs of treewidth $\tw$ in time $p^{\tw}n^{O(1)}$.

Furthermore, there is no $\eps >0$ for which there is an algorithm that counts the number of spanning connected edge sets modulo $p$ of $n$-vertex graphs of cutwidth $\cw$ in time $(p - \epsilon)^\ctw n^{O(1)}$, assuming SETH.
\end{theorem}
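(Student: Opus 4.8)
The plan is to isolate a modulo-$p$ analogue of the cut-and-count identity that equates the number of connected spanning edge sets, modulo $p$, with a bounded-treewidth partition function, and then to use this same identity to drive the reduction behind the lower bound. For a graph $H$ with a fixed vertex $r$ and a map $f\colon V(H)\to\{0,\dots,p-1\}$, let $m_H(f)$ be the number of edges of $H$ monochromatic under $f$. Expanding $2^{m_H(f)}=\#\{X\subseteq E(H): \text{$f$ is monochromatic on every edge of }X\}$ and exchanging the order of summation gives, for every $H$,
\[
\sum_{\substack{f\colon V(H)\to\{0,\dots,p-1\}\\ f(r)=0}} 2^{m_H(f)}\;=\;\sum_{X\subseteq E(H)} p^{\,c(H,X)-1},
\]
where $c(H,X)$ is the number of connected components of $(V(H),X)$; here one uses that $f$ is monochromatic on $X$ exactly when it is constant on every component of $(V(H),X)$, and that the component of $r$ must receive the value $0$ while the others are free. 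Every term on the right with $c(H,X)\ge 2$ is divisible by $p$, so modulo $p$ the right-hand side counts precisely the $X$ for which $(V(H),X)$ is connected, i.e.\ the connected spanning edge sets of $H$; hence that number is congruent, modulo $p$, to the left-hand sum.

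The algorithm then follows at once: the left-hand sum is the partition function of a $p$-state spin system in which an edge carries weight $2$ when monochromatic and $1$ otherwise (pinning $r$ to $0$ is just a normalization), and this is evaluated on a nice tree decomposition of width $\tw$ by the standard dynamic program whose state at a bag $B$ is a map $B\to\{0,\dots,p-1\}$ and whose entry records the corresponding partial sum. The introduce-vertex, introduce-edge (multiply entries whose relevant endpoints agree by $2$), forget (sum out a coordinate), and join (pointwise product) operations each take $p^{|B|}n^{O(1)}\le p^{\tw+1}n^{O(1)}$ time, so the whole computation runs in $p^{\tw}n^{O(1)}$ with all arithmetic carried out modulo $p$. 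Since $\tw\le\ctw$, this is in particular a $p^{\ctw}n^{O(1)}$-time algorithm, matching the lower bound below.

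For the lower bound we reduce from counting essentially distinct $p$-colourings modulo $p$; since $p$ is prime, $p\ge 3$, and $p\nmid p-1$, the second item of Theorem~\ref{thm:main} rules out a $(p-\eps)^{\ctw}n^{O(1)}$ algorithm for it under SETH. Let $G$ be such an instance, which we may take to be connected with $\chi(G)=p$ (as holds for the hard instances in the proof of Theorem~\ref{thm:main}), and let $H$ be obtained from $G$ by subdividing each edge exactly $p-2$ times, so that every edge of $G$ becomes a path of length $\ell:=p-1$. For a length-$\ell$ path whose endpoints are coloured $a$ and $b$, summing $2^{\#\text{monochromatic edges}}$ over the colours of the internal vertices equals $(T^{\ell})_{a,b}$, where $T=I+J$ is the $p\times p$ matrix with $J$ the all-ones matrix; from $T^{\ell}=I+\frac{(p+1)^{\ell}-1}{p}J$ and $\frac{(p+1)^{\ell}-1}{p}\equiv\ell\equiv-1\pmod p$, this value is $\equiv 0$ when $a=b$ and $\equiv -1$ when $a\ne b$. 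Substituting into the identity above and factoring the resulting product over $E(G)$ gives
\[
\#\{\text{connected spanning edge sets of }H\}\;\equiv\;(-1)^{|E(G)|}\cdot\#\{\text{proper $p$-colourings $g$ of $G$ with }g(r)=0\}\pmod p.
\]
Because $\chi(G)=p$ forces every proper $p$-colouring to use all $p$ colours, the last count equals $(p-1)!$ times the number of essentially distinct $p$-colourings of $G$, hence is $\equiv -\#\{\text{essentially distinct $p$-colourings of }G\}\pmod p$ by Wilson's theorem. Thus $\#\{\text{connected spanning edge sets of }H\}\bmod p$ determines $\#\{\text{essentially distinct $p$-colourings of }G\}\bmod p$ up to the known sign $(-1)^{|E(G)|+1}$.

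The hard part will be to verify that this transformation leaves the cutwidth essentially unchanged, which is crucial because a careless edge gadget would blow it up by a multiplicative factor. Starting from a cutwidth ordering of $G$ of width $\ctw$, one orders $V(H)$ by following the order on $V(G)$ and, immediately after each vertex $v$, emitting as one consecutive block (in path order) the subdivision vertices of every edge $vw$ with $v$ preceding $w$. A direct check shows that every prefix cut of this ordering of $H$ crosses exactly as many edges as the corresponding prefix cut of $G$: at each position strictly between the left and the right endpoint of a $G$-edge $e$ exactly one edge of the path replacing $e$ is cut, and the positions where this happens in $H$ mirror the positions at which $e$ is cut in $G$. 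Consequently $\ctw(H)\le\ctw(G)$ and $|V(H)|\le|V(G)|^{O(1)}$, so a $(p-\eps)^{\ctw}n^{O(1)}$-time algorithm for counting connected spanning edge sets modulo $p$ would yield, via this reduction, a $(p-\eps)^{\ctw}n^{O(1)}$-time algorithm for counting essentially distinct $p$-colourings modulo $p$, contradicting Theorem~\ref{thm:main} under SETH. (For $p=2$ the problem is polynomial-time solvable, since the number of connected spanning edge sets is odd exactly when the graph is connected and bipartite, so the statement is meant for odd primes.) The main obstacle is precisely this cutwidth bookkeeping; the identity is clean cut-and-count and the remaining steps are routine.
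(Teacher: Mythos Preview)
Your proof is correct and follows the same high-level route as the paper: a cut-and-count dynamic program on a nice tree decomposition for the algorithm, and for the lower bound a reduction from counting essentially distinct $p$-colourings modulo $p$ via the $(p-1)$-stretch combined with Wilson's theorem.

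The difference is only in packaging. You organise both halves around the single identity $\sum_{f:f(r)=0}2^{m_H(f)}=\sum_{X\subseteq E(H)} p^{c(H,X)-1}$ and, for the reduction, evaluate it on subdivided edges with the transfer matrix $T=I+J$, obtaining $T^{p-1}\equiv I-J\pmod p$ directly; the paper instead reaches the same reduction through the Tutte-polynomial stretch formula $T({}^{p-1}G;1,2)\equiv_p(-1)^{n-r(G)}T(G;1-p,0)$ and the chromatic specialisation $P(G;p)=(-1)^{r(G)}p\,T(G;1-p,0)$. These are the same computation in different clothing; your version is more self-contained and reuses the algorithmic identity for the lower bound, while the paper's version makes the Tutte-polynomial connection explicit. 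You are also more careful than the paper on two points: you actually write down a cutwidth ordering of the stretch (the paper simply asserts $\ctw({}^kG)=\ctw(G)$), and you flag the $p=2$ degeneracy, which the theorem as stated does not exclude even though the coloring gadgets behind Corollary~\ref{cor:DisColLow} require $q\ge 3$.
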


Note that before our work, even for the treewidth parameterization, the best conditional lower bound by Dell et al.~\cite{DellHMTW14} only excluded $2^{o(\tw)}n^{O(1)}$ time algorithms for this problem.

While the algorithm follows relatively quickly by using a cut-and-count type dynamic programming approach, obtaining the lower bound is much harder.

In fact, for related counting variants of connectivity problems such as counting the number of Hamiltonian cycles or Steiner trees, $2^{O(\tw)}n^{O(1)}$ time algorithms do exist. So one may think that connected spanning edge sets can be counted in a similar time bound. But in Theorem~\ref{thm:main_upperboundcse} we show that this is not the case (by choosing $p$ arbitrarily large). 

To prove the lower bound, we make use of an existing  formula for the Tutte polynomial that relates the number of connected spanning edge sets to the number of essentially distinct colorings, and subsequently apply Theorem~\ref{thm:main}.

\subparagraph{Organization} The rest of the paper is organized as follows: in Section \ref{sec:prelim} we introduce the notation that will be used throughout the paper and define the color compatibility matrix. In Section \ref{sec:colupp} we prove the upper bound for \#$q$-coloring modulo $p$. Section \ref{sec:collow} contains the results about lower bounds. We conclude the paper by discussing directions for further research. The appendix contains the proofs omitted from previous sections.





\subsection{Related work}


\subparagraph{Coloring} Counting the number of colorings of a graph is known to be $\#P$-complete, even for special classes of graphs such as triangle free regular graphs \cite{greenhill2000complexity}. Bj\"orklund and Husfeldt \cite{bjorklund2006inclusion} and Koivisto~\cite{Koivisto06} gave a $2^nn^{O(1)}$ algorithm for counting $q$-colorings, and a more general $2^nn^{O(1)}$ time algorithm even evaluates any point of the Tutte polynomial~\cite{BjorklundHKK08}.

A $q$-coloring of a graph $G$ is a special case of \emph{$H$-coloring}, i.e. a homomorphism from $G$ to a given graph $H$. Namely, $q$-colorings  correspond to homomorphisms from $G$ to $K_q$, i.e. $K_q$-colorings.  
Dyer and Greenhill \cite{dyer2000complexity} showed that counting the number of $H$-colorings is $\#P$-complete unless $H$ is one of the few exceptions (an independent set, a complete graph with loops on every vertex or a complete bipartite graph). Kazeminia and Bulatov \cite{kazeminia2019counting} classified the hardness of counting $H$-colorings modulo a prime $p$ for square-free graphs $H$.

\subparagraph{Methods}
Our approach makes use of the rank based method, and in particular the so-called color compatibility matrix introduced in~\cite{jansen2019computing}. This matrix tells us whether we can `combine' two colorings. In~\cite{jansen2019computing}, the authors studied the rank of a different matrix with the same support as the color compatibility matrix, whereas in this paper we use the rank directly. The rank based method has been used before only once for an algorithm for a counting problem in~\cite{CyganKN18} and only once for a lower bound for a counting problem in~\cite{CurticapeanLN18}. 

The Tutte polynomial $T(G;x,y)$ is a graph polynomial in two variables which describes how $G$ is connected. In particular, calculating $T(G;x,y)$ at specific points gives us the number of subgraphs of $G$ with certain properties: $T(G;2,1)$ is equal to the number of forests in $G$, $T(G;1,1)$ is the number of spanning forests, $T(G;1,2)$ counts the number of spanning connected subgraphs etc. We will use the properties of the Tutte polynomial to give a lower bound on the complexity of counting spanning connected edge sets.

\section{Preliminaries}
\label{sec:prelim}
In this section, we introduce the notation that will be used throughout the paper.  

\subsection{Notation and standard definitions}
For integers $a,b$, we write $[a,b]=\{a,a+1,\dots ,b\}$ for the integers between $a$ and $b$, and for a natural number $n$ we short-cut $[n]=[1,n]=\{1,\dots,n\}$. Throughout the paper, $p$ will denote a prime number and $\mathbb{F}_p$ the finite field of order $p$. We will use $a\equiv_p b$ to denote that $a$ and $b$ are congruent modulo $p$, i.e. that $p$ divides $a-b$. We write $\mathbb{N}=\{1,2,\dots\}$ for the set of natural numbers.

For a function $f:A\rightarrow \mathbb{Z}$ (where $A$ is any set),  we define the support of $f$ as the set $\supp(f)=\{a\in A\: : \: f(a)\neq 0 \}$. For $B\subseteq A$, the function $f|_B:B\to \mathbb{Z}$ is defined as $f|_B(b)=f(b)$ for all $b\in B$.

In this paper, all graphs will be undirected and simple. Given a graph $G=(V,E)$ and a vertex $v\in V$, we denote by $N(v)$ the open neighbourhood of $v$, i.e. the set of all vertices adjacent to $v$. We often use $n$ for the number of vertices of $G$, and denote the cutwidth of $G$ by $\cw$. We sometimes write $V(G)$ for the vertex set of the graph $G$.

Note that, if $G$ is not connected, we can count the number of $q$-colorings in each connected component and multiply them to get the total number of $q$-colorings of $G$. Therefore, we may assume that $G$ is connected.

Given a graph $G=(V,E)$, and lists $L:V\rightarrow 2^{[q]}$, a \emph{list $q$-coloring} of $G$ is a coloring $c:V\rightarrow [q]$ of its vertices such that $c(u)\neq c(v)$ for all edges $uv$ and $c(v)\in L(v)$ for all vertices $v$. We will often abbreviate `list $q$-coloring' to `coloring'. For a subset $B \subseteq V(G)$, we will use the abbreviation $c(B) = \{c(v) : v \in B\}$.

Cutwidth and treewidth are graph parameters which are often used in parameterized complexity. Informally, treewidth describes how far a graph is from being a tree. 
The cutwidth is defined as follows.
\begin{definition}
	The \emph{cutwidth} of a graph $G$ is the smallest $k$ such that its vertices can be arranged in a sequence $v_1,\dots,v_n$ such that for every $i\in [n-1]$, there are at most $k$ edges between $\{v_1,\dots,v_i\}$ and $\{v_{i+1},\dots,v_n\}$.
\end{definition}

We recall the definition of Tutte polynomial.
\begin{definition}
	For a graph $G$, we denote by $T(G;x,y)$ the \emph{Tutte polynomial} of $G$ evaluated at the point $(x,y)$. If $G$ has no edges we have $T(G;x,y) = 1$. Otherwise we have
	\[
	T(G;x,y) = \sum \limits_{A \subseteq E(G)} (x-1)^{r(E) - r(A)} (y-1)^{|A|-r(A)}
	\]
	where $r(A) = |V(G)| - k(A)$ indicates the rank of the edge set $A$ and $k(A)$ indicates the number of connected components of $(V,A)$.
\end{definition}

Note that $T(G;1,2)$ is exactly the number of spanning connected edge sets. 

We denote the counting version of a problem by using the prefix \#, and the counting modulo $p$ version of by using $\#_p$ (e.g. $\#_p \textsc{SAT}$, $\#_p \textsc{CSP}$).  

\subsection{The color compatibility matrix and its rank}
\label{subsec:rank}
Given a subset $A\subseteq V$, we use $\col_L(A)$ to denote the set of all list $q$-colorings of $G[A]$. If it is clear which lists are used, we omit the subscript. 

It is often useful to color parts of the graph separately, and then `combine' those colorings. If two colorings can be combined without conflicts, we call them compatible:

\begin{definition}
	\label{def:ccrelation}
	For subsets $A,B\subseteq V$ and colorings $x\in \col(A)$, $z\in \col(B)$, we say that $x$ and $z$ are \emph{compatible}, written $x\sim z$, if
	\begin{itemize}
		\item $x(v)=z(v)$ for all $v\in A \cap B$, and
		\item $x(u)\neq z(v)$ for all $uv\in E$, where $u\in A$ and $v\in B$.
	\end{itemize}
For a set of colorings $\S\subseteq \col(B)$, we write $\S[x]$ for the set of colorings $y\in \S$ that are compatible with $x$. 
\end{definition}
If $x\sim z$, then we define $x\cup z$ as the $q$-list coloring of $G[A\cup B]$ such that $(x\cup z)(a)=x(a)$ for all $a\in A$ and $(x\cup z)(b)=z(b)$ for all $b\in B$. This is well-defined by the definition above.

A key definition for this paper is the following.
\begin{definition}
	\label{def:ccmatrix}
Let $(X\cup Y, E)$ be a bipartite graph and $q$ a natural number. 
The $q$th \emph{color compatibility matrix} $M$ is indexed by all $q$-colorings of $X$ and $Y$, with 
\[
M[x,y]=\begin{cases}
	1, & \text{ if } x\sim y, \\
	0, & \text{ otherwise,}
	\end{cases}
\]
for $x\in \col(X)$ and $y\in \col(Y)$.
\end{definition}
We denote the color compatibility matrix indexed by all $q$-colorings associated with the bipartite graph that is matching on $t$ vertices by $J_t$, and short-hand $J:=J_1$.

We will show that, if $p$ divides $q-1$, we can count all $q$-list colorings modulo $p$ more quickly due to the following bound on the rank of the color compatibility matrices.
\begin{lemma} 
\label{lem:rankComp}
	Let $p$ be a prime, $q$ a natural number and let $G=(X\cup Y, E)$ be a bipartite graph with $q$th color compatibility matrix $M$. Then the rank of $M$ over $\mathbb{F}_p$ satisfies
	\[
	\rank_p(M)\leq \begin{cases}
		(q-1)^{|E|} & \text{ if } p\text{ divides }q-1,\\
		q^{|E|} & \text{ otherwise.} 
	\end{cases}\]
	Moreover, equality is achieved if $G$ is a perfect matching.
\end{lemma}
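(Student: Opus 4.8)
The plan is to first reduce to the case of a single edge and then use multiplicativity of both the compatibility relation and matrix rank under the relevant product operation. Concretely, suppose $E = \{e_1, \ldots, e_m\}$ where $e_i = x_i y_i$ with $x_i \in X$ and $y_i \in Y$ (note vertices may be incident to several edges, but the key point is that a coloring of $X$ is just a tuple of colors, one per vertex of $X$, and similarly for $Y$). I would observe that $x \sim y$ holds if and only if $x(x_i) \neq y(y_i)$ for every $i \in [m]$; that is, the compatibility condition factors as a conjunction over edges. This means that $M$ is a submatrix (obtained by selecting the rows and columns corresponding to colorings that are actually proper on $G[X]$ and $G[Y]$, and identifying repeated coordinates) of the $m$-fold tensor (Kronecker) product $J \otimes J \otimes \cdots \otimes J$ of the single-edge matrix $J$, whose $(a,b)$ entry for $a,b \in [q]$ is $1$ iff $a \neq b$. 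I would make this precise by noting that $J^{\otimes m}$ is indexed by $[q]^m \times [q]^m$, and $M[x,y] = \prod_{i=1}^m J[x(x_i), y(y_i)] = J^{\otimes m}[(x(x_1), \ldots, x(x_m)), (y(y_1), \ldots, y(y_m))]$, so $M$ is obtained from $J^{\otimes m}$ by restricting to a subset of rows and columns. Hence $\rank_p(M) \le \rank_p(J^{\otimes m}) = (\rank_p(J))^m$, using that rank is multiplicative under tensor products (over a field).

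The next step is the base case: compute $\rank_p(J)$ where $J = \mathbf{1}\mathbf{1}^\top - I$ is the $q \times q$ all-ones matrix minus the identity. Its eigenvalues are $q-1$ (once, eigenvector $\mathbf{1}$) and $-1$ (with multiplicity $q-1$); equivalently $J + I$ has rank $1$. Over $\F_p$, this gives $\det(J) = (q-1)(-1)^{q-1}$, so $J$ is invertible (rank $q$) precisely when $p \nmid q-1$. When $p \mid q-1$, I would argue the rank is exactly $q-1$: it is at most $q-1$ since all rows sum to $q-1 \equiv_p 0$ so $\mathbf{1}$ is in the left kernel (equivalently $J\mathbf{1} = (q-1)\mathbf{1} \equiv_p 0$), and at least $q-1$ since deleting one row and one column leaves a $(q-1)\times(q-1)$ matrix of the same form $\mathbf{1}\mathbf{1}^\top - I$ whose determinant is $(q-2)(-1)^{q-2} = (q-2)(-1)^q$, and this is nonzero mod $p$ because $q - 2 \equiv_p -1 \not\equiv_p 0$. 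Combining with the tensor bound yields $\rank_p(M) \le (q-1)^{|E|}$ when $p \mid q-1$ and $\le q^{|E|}$ otherwise.

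For the "moreover" part, when $G$ is a perfect matching on $2t$ vertices (so $|E| = t$, all $x_i$ distinct and all $y_i$ distinct), there is no identification of coordinates and no properness constraint removes any coloring, so $M$ is \emph{exactly} $J^{\otimes t} = J_t$, and then $\rank_p(M) = (\rank_p(J))^t$ is exactly $(q-1)^t$ or $q^t$ according to the two cases. I should double-check that when $G$ is a perfect matching the colorings of $X$ really are in bijection with $[q]^t$ with no constraints — this holds because $G[X]$ has no edges, so every map $X \to [q]$ is a proper $q$-coloring.

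The main obstacle I anticipate is bookkeeping the reduction from general bipartite $G$ to the tensor power cleanly: when a vertex of $X$ is incident to multiple edges, the map sending a coloring $x$ of $X$ to the tuple $(x(x_1), \ldots, x(x_m)) \in [q]^m$ is not injective (its image is a "diagonal" subset), and one must be careful that restricting $J^{\otimes m}$ to the image of this map, on both sides, genuinely realizes $M$ as a submatrix rather than something more complicated. The clean way to say it: the rows of $M$ are a subset of the rows of $J^{\otimes m}$ (after the coordinate map, which is injective on proper colorings only in the matching case but in general just gives a well-defined row selection — we select, for each proper coloring $x$ of $G[X]$, the row of $J^{\otimes m}$ indexed by its coordinate tuple), and likewise for columns; any submatrix has rank at most that of the ambient matrix. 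This suffices for the inequality, which is all that is needed in the general case. The only other point requiring care is the hypothesis $q \ge 3$ (inherited from the ambient setting, though Lemma~\ref{lem:rankComp} as stated does not assume it): for $q = 2$ and $p = 2 \mid q - 1 = 1$ is false, and for the genuinely relevant regime $q \ge 3$ the inequality $q - 2 \not\equiv_p 0$ used above is exactly where primality and $p \mid q-1$ (hence $p \le q - 1$, so $p \nmid q-2$ unless $p = 1$) are invoked — I would state this divisibility fact as a one-line remark rather than belabor it.
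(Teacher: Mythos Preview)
Your proposal is correct and follows essentially the same route as the paper: both arguments show $M$ is a submatrix of the $|E|$-fold Kronecker power $J^{\otimes |E|}$ of the single-edge matrix (the paper phrases this as ``identifying vertices of a matching'', you as ``selecting rows/columns via the coordinate map''), then compute $\rank_p(J)$ in the base case and use multiplicativity of rank under tensor products. One small clarification: your worry about restricting to colorings ``proper on $G[X]$'' is moot in general, not just in the matching case, since $G$ is bipartite and hence $G[X]$ has no edges; the only genuine issue is that the coordinate map repeats entries when a vertex has degree $>1$, which you handle correctly as a row/column selection.
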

\begin{proof}
    	Let $p$ be a prime, $q$ a natural number and let $G=(X\cup Y, E)$ be a bipartite graph with $q$th color compatibility matrix $M$. We need to show that the rank of $M$ over $\mathbb{F}_p$ satisfies
	\[
	\rank_p(M)\leq \begin{cases}
		(q-1)^{|E|} & \text{ if } p\text{ divides }q-1,\\
		q^{|E|} & \text{ otherwise,} 
	\end{cases}\]
	with equality achieved if $G$ is a matching.
	
In the following proof, we will omit the subscript from $\rank_p$.
Recall that we denote the color compatibility matrix associated with the bipartite graph that is a matching on $t$ pairs by $J_t$.
We will first show that $\rank(J_t)=\rank(J_1)^t$. Then we will show that $\rank(M)\leq \rank(J_{|E|})$ for any graph $G$ and its color compatibility matrix $M$. Finally, we compute $\rank(J_1)$, which will cause the difference between the cases when $p$ divides $q-1$ and when it does not. 
	
	\begin{itemize}
	\item Let us consider the perfect matching case. 
	We will show by induction on $r$ that $M=J_r$ is the $r$th Kronecker power of $J_1$, which implies $\rank(J_r)=\rank(J_1)^r$.
	
	The Kronecker product of an $m\times n$ matrix $A$ and an $p\times q$ matrix $B$ is defined as the following $pm\times qn$ block matrix 
	\[
	A\otimes B=\begin{pmatrix}
		a_{11}B & \cdots & a_{1n}B \\
		\vdots & \ddots & \vdots \\
		a_{m_1}B & \cdots & a_{mn}B
		\end{pmatrix}.
		\]
The entries on the diagonal of $J_1$ are zero and the off-diagonal entries are one: two colors are in conflict if and only if they are the same. Consider now some $r\geq 2$ and let $X=\{x_1,\dots,x_r\}$, $Y=\{y_1,\dots,y_r\}$ and $E=\{\{x_1,y_1\},\dots,\{x_r,y_r\}\}$.
	We write a coloring of $X$ as $c_1\dots c_r$, where $c_i$ is the color assigned to $x_i$. We assume that the entries in $M$ are indexed by (lexicographically ordered) colorings $c_1\dots c_r$ and $c'_1\dots c'_r$, where $c_i$ and $c'_i$ are the colors of $x_i$ and $y_i$ respectively. We will split up $M$ into $q^2$ $(q^{r-1}\times q^{r-1})$-matrices:
		\[
	M=\begin{pmatrix}
		M^{(11)} & \dots & M^{(1q)} \\
		\vdots & \ddots & \vdots \\
		M^{(q1)} & \dots & M^{(qq)}
	\end{pmatrix}.
	\]
	For $i,j\in [q]$, the matrix $M^{(ij)}$ is indexed by colorings where $x_1$ receives color $i$ and $y_1$ receives color $j$. Therefore all $M^{(ii)}$ equal the zero matrix. For $i\neq j$, $M^{(ij)}$ is the color compatibility matrix $J_{r-1}$ of $((X\setminus \{x_1\})\cup (Y\setminus \{y_1\}), E\setminus \{x_1,y_1\})$. This proves that $M$ equals the Kronecker product $J_{1}\otimes J_{r-1}$.
	
	Therefore, $J_r$ is the $r$th Kronecker power of $J_1$, so it has rank $(\rank(J_1))^r$.
	
	\item Let us now consider the general case. Let $G$ be any bipartite graph with $t\geq 1$ edges and let $M$ the corresponding color compatibility matrix. By a similar argument to the one above, we may assume that $G$ has no isolated vertices (else we can write $M$ as the Kronecker product of the all ones matrix and the color compatibility matrix of the graph $G$ from which one isolated vertex is removed).
	Let $G_t$ be a perfect matching with $t$ edges and $J_t$ the corresponding color compatibility matrix. We claim that $M$ is a submatrix of $J_t$. 
	
	Namely, note that we can obtain $G$ from $G_t$ by identifying some of its vertices. We observe how the color compatibility matrix changes after identifying two vertices: identifying vertices $u,v$ from the left-hand side of the bipartition corresponds to deleting all rows where $u$ and $v$ are assigned a different color. Similarly, identifying vertices in the right part corresponds to deleting columns. Therefore, $M$ can be obtained from $J_t$ by deleting some rows and columns. In particular, $\rank(M)\leq \rank(J_t)$. 
	
	\item Finally, we compute the rank of $J_1$. Recall that $J_1$ is a $q\times q$ matrix with zeros on the diagonal and ones off-diagonal. If $p$ divides $q-1$, this matrix has rank $q-1$ and if not, it has full rank. 
	\end{itemize}
	Therefore, $\rank(J_t)$ is equal to $(q-1)^t$ and $q^t$ respectively. The rank of a color compatibility matrix of an arbitrary graph $G$ is bounded by $\rank(J_{|E|})$, giving the required bounds. 
\end{proof}
In particular, $J_t$ is invertible mod $p$ if and only if $p$ does not divide $q-1$.

\section{Algorithm for \#$q$-coloring modulo $p$}
\label{sec:colupp}
In this section we prove the first part of the first item of Theorem~\ref{thm:main}:
\begin{theorem}
	\label{thm:main_upperboundcol} 
	Let $G$ be a graph with $n$ vertices and cutwidth $\cw$. Given an integer $q\geq 3$ and a prime $p$ that divides $q-1$, there is an $(q-1)^{\cw}n^{O(1)}$ algorithm for counting list $q$-colorings modulo $p$. 
\end{theorem}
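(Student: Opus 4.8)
The plan is to run a left-to-right dynamic programming along a cutwidth ordering $v_1,\dots,v_n$ of width $\cw$, and to compress the table using Lemma~\ref{lem:rankComp}. Write $V_i=\{v_1,\dots,v_i\}$ and let $\partial_i$ denote the set of edges with exactly one endpoint in $V_i$, so $|\partial_i|\le\cw$; each $e\in\partial_i$ has a ``left endpoint'' in $V_i$, and a list $q$-coloring of $G[V_i]$ induces a color there. I would index the natural table by functions $f\colon\partial_i\to[q]$, letting $R_i[f]$ be the number, modulo $p$, of list $q$-colorings of $G[V_i]$ that induce $f$ (with $R_i[f]=0$ when $f$ does not arise from a coloring of the left boundary). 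Then $R_0$ is the scalar $1$, $R_n$ is the quantity we want (as $\partial_n=\emptyset$), and the step from $R_i$ to $R_{i+1}$ amounts to: introducing $v:=v_{i+1}$; fixing its color $c\in L(v)$; forbidding $c$ at the left endpoints of the edges of $\partial_i$ joining $v$ to $V_i$, which then leave the cut (the ``dying'' edges $D$); deleting those coordinates; appending new coordinates, all pinned to the value $c$, for the ``born'' edges $B$ joining $v$ to $V\setminus V_{i+1}$; and finally summing over $c$. This already gives a $q^{\cw}n^{O(1)}$-time algorithm.

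To reach $(q-1)^{\cw}$, I would fix once and for all a $q\times(q-1)$ matrix $W$ over $\F_p$ whose columns span the column space of $J=J_1$. Since $p\mid q-1$, Lemma~\ref{lem:rankComp} gives $\rank_p(J)=q-1$, and one checks that $\ker W^\top$ equals the span of the all-ones vector $\mathbf{1}\in\F_p^q$ (indeed $J\mathbf{1}=(q-1)\mathbf{1}=0$ and the column space of $J$ has dimension $q-1$). Applying $W^\top$ in each edge coordinate turns $R_i$ (a tensor carrying one $[q]$-index per edge of $\partial_i$) into a tensor $\hat R_i$ carrying one $[q-1]$-index per edge, hence with at most $(q-1)^{|\partial_i|}\le(q-1)^{\cw}$ entries; and since $J_{|\partial_i|}$ is the $|\partial_i|$-th Kronecker power of $J$ (see the proof of Lemma~\ref{lem:rankComp}), the scalar $R_n$ is simply $\hat R_n$. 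So it suffices to run the recursion directly on the compressed tables.

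The heart of the argument is to carry out each transition on $\hat R_i$ rather than $R_i$, within the budget $(q-1)^{\cw}n^{O(1)}$. Every elementary step above acts as a Kronecker (edge-by-edge) linear operator, and the compression $(W^\top)^{\otimes\partial_i}$ is edge-by-edge too, so it is enough to analyze the effect on a single coordinate. Appending a born edge pinned to $c$ contributes, after compression, the fixed vector $W^\top e_c\in\F_p^{q-1}$ (where $e_c$ is the $c$-th standard basis vector), and a surviving edge is unchanged. The delicate case is a dying edge, where ``forbid color $c$, then sum the coordinate out'' is the linear functional $\mathbf{1}^\top P_c$ on $\F_p^q$, with $P_c$ the diagonal matrix that zeroes coordinate $c$. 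Since $\mathbf{1}^\top P_c\mathbf{1}=q-1\equiv 0\pmod p$, this functional vanishes on $\ker W^\top$ and therefore factors as $\psi_c\circ W^\top$ for an explicit covector $\psi_c$; hence the dying-edge operation descends to the compressed table as well. This is the second place, after Lemma~\ref{lem:rankComp} itself, where $p\mid q-1$ is used. Contracting the dying edges and then appending the born edges one coordinate at a time keeps every intermediate tensor of size at most $(q-1)^{\cw}$, each step costing $(q-1)^{\cw}n^{O(1)}$ once we sum over the at most $q$ colors of $L(v)$; over the $n$ vertices this yields the stated running time, and reading off $\hat R_n$ returns the number of list $q$-colorings of $G$ modulo $p$.

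I expect the main obstacle to be exactly this last point --- that ``forgetting'' a cut edge is compatible with the rank-based compression. The identity $\mathbf{1}^\top P_c\mathbf{1}=q-1\equiv 0\pmod p$ is what makes it go through, and it is precisely what would fail when $p\nmid q-1$, in agreement with the matching $(q-\eps)^{\cw}$ lower bound in Theorem~\ref{thm:main}.
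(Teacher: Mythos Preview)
Your proposal is correct, and it is a genuinely different route from the paper's proof.

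The paper indexes its dynamic programming table by colorings of the \emph{boundary vertices} $X_i=L_i\cup\{v_i\}$ (the left endpoints of cut edges together with the current vertex), and then compresses by turning degree-$1$ boundary vertices into ``reduced vertices'' one at a time via the \textbf{Reduce} operation of Lemma~\ref{lem:repset}. A reduced vertex never takes color $q$, so the support drops to $(q-1)^{|R_i|}q^{|X_i|-|R_i|}$. The paper then needs the combinatorial argument of Lemma~\ref{lem:iter}: every non-reducible vertex has at least two edges across the cut, hence $|X_i\setminus R_i|\le(\cw-|R_i|)/2+1$, and one finishes with the numerical inequality $\sqrt{q}<q-1$ for $q\ge3$.

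Your proposal instead indexes by functions on the cut \emph{edges} $\partial_i$ and compresses every edge coordinate simultaneously via the fixed linear map $W^\top$. The $(q-1)^{\cw}$ bound is then immediate from $|\partial_i|\le\cw$, with no degree bookkeeping and no use of $\sqrt{q}<q-1$. The price you pay is the linear-algebraic check that ``forget a dying edge'' (the functional $\mathbf{1}^\top P_c$) factors through $W^\top$; your identity $\mathbf{1}^\top P_c\mathbf{1}=q-1\equiv 0\pmod p$ is exactly what is needed, since $\ker W^\top=\operatorname{col}(J)^\perp=\operatorname{span}(\mathbf 1)$ by the symmetry of $J$ and Lemma~\ref{lem:rankComp}. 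This is the analogue of the place where the paper's \textbf{Reduce} uses $p\mid q-1$.

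In short: both proofs exploit $\rank_p(J)=q-1$, but the paper compresses combinatorially on vertex coordinates and must argue that most boundary vertices have degree one, whereas you compress linearly on edge coordinates and get the size bound for free. Your argument is cleaner for cutwidth specifically; the paper's vertex-based framing is perhaps closer to how one would generalize to other vertex-separator parameters.
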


\subsection{Definitions and overview}
We first introduce some additional notation and definitions needed in this section.  Let $q$ be an integer and let $p$ be a prime that divides $q-1$.  We are given a graph $G=(V,E)$ with the cutwidth ordering $v_1,\dots,v_n$ of the vertices.
Without loss of generality, we may assume that $G$ is connected. 
We write $G_i=G[\{v_1,\dots,v_i\}]$ and
\[
L_i=\{v\in V(G_i):vv_j\in E \text{ for some } j>i\}.
\]
Note that by definition of cutwidth, $L_i\subseteq L_{i-1}\cup \{v_i\}$ and  $|L_i|\leq \cw$ for all $i$ (since the number of endpoints of a set of edges is upper bounded by the number of edges in the set).

Let $i\in [n]$ be given and write $X_i=L_i\cup \{v_i\}$ for the set of vertices left of the cut that either have an edge in the cut, or are the rightmost vertex left of the cut. We also define $Y_i=\{v_{i+1},\dots,v_n\}\cap N(X_i)$. Figure \ref{fig:ub1} illustrates this notation.

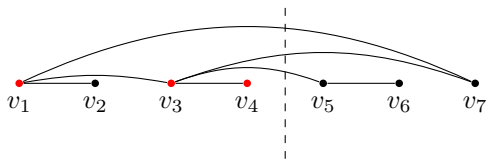
\begin{figure}[h]
	\centering
		\begin{tikzpicture}
		\tikzset{
			EdgeStyle/.append style = {->, bend left} }
		\tikzstyle{vertex}=[circle,fill=black,inner sep=1pt]
		\foreach \x in {1,3,4}
		{
			\node[vertex, label=below:$v_\x$, fill=red] (v\x) at (\x, 0){};
		}
		\foreach \x in {2,5,6,7}
		{
			\node[vertex, label=below:$v_\x$] (v\x) at (\x, 0){};
		}
		\draw (v1) edge[bend left=25] (v7);
		\draw (v3) edge (v4);
		\draw (v3) edge[bend left=20] (v5);
		\draw (v1) edge[bend left=10] (v3);
		\draw[dashed] (4.5, 1)--(4.5, -1);
		\draw (v6) edge (v5);
		\draw (v3) edge[bend left=20] (v7);
		\draw (v2) edge (v1);
	\end{tikzpicture}
	\caption{In the above graph,  $L_4=\{v_1,v_3\}$, $X_4=\{v_1,v_3,v_4\}$ (the red vertices) and $Y_4=\{v_5,v_7\}$.}
	\label{fig:ub1}
\end{figure}

Let $T_i[x]$ be the number of extensions of $x\in \col(X_i)$ to a coloring of $G_i=G[\{v_1,\dots,v_i\}]$. Equivalently, $T_i[x]$ gives the number of colorings of $G_i$ that are compatible with $x$. 

A standard dynamic programming approach builds on the following observation.
\begin{lemma}[Folklore]
	\label{lem:folkloreDP}
	For $x\in \col(X_i)$,
	\[
	T_i[x]=\sum_{\substack{z\in \col(X_{i-1})\\z\sim x}} T_{i-1}[z].
	\]
\end{lemma}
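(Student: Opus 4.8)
\textbf{Proof plan for Lemma~\ref{lem:folkloreDP}.} The statement is a routine dynamic-programming identity, so the plan is to verify it by exhibiting a bijection between the objects counted by the two sides. The left-hand side $T_i[x]$ counts colorings $c$ of $G_i$ with $c|_{X_i} \sim x$, i.e.\ (since $X_i \subseteq V(G_i)$) colorings $c$ of $G_i$ that agree with $x$ on $X_i$. The right-hand side is a sum over $z \in \col(X_{i-1})$ compatible with $x$ of the number of colorings of $G_{i-1}$ agreeing with $z$ on $X_{i-1}$. So I would set up the correspondence $c \mapsto (c|_{V(G_{i-1})}, c|_{X_{i-1}})$, or more precisely $c \mapsto c|_{V(G_{i-1})}$ together with the induced restriction to $X_{i-1}$, and argue this is a bijection from the set counted by the left side to the disjoint union (over valid $z$) of the sets counted on the right.

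First I would check the forward direction: given a coloring $c$ of $G_i$ with $c|_{X_i} = x$, its restriction $c' := c|_{V(G_{i-1})}$ is a valid list $q$-coloring of $G_{i-1}$ (being a restriction of a valid coloring), and $z := c'|_{X_{i-1}}$ is a valid coloring of $X_{i-1}$. The key point is that $z \sim x$: the first compatibility condition $z(v) = x(v)$ for $v \in X_{i-1} \cap X_i$ holds because both equal $c(v)$; the second condition $z(u) \neq x(v)$ for edges $uv$ with $u \in X_{i-1}$, $v \in X_i$ holds because $c$ is a proper coloring and $c(u) = z(u)$, $c(v) = x(v)$. Hence $c'$ is counted in the term for $z$. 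Conversely, given a valid $z \sim x$ and a coloring $c'$ of $G_{i-1}$ with $c'|_{X_{i-1}} = z$, I would reconstruct $c$ on $V(G_i) = V(G_{i-1}) \cup \{v_i\}$ by setting $c|_{V(G_{i-1})} = c'$ and $c(v_i) = x(v_i)$ (note $v_i \in X_i$, so $x(v_i)$ is defined). One must check $c$ is a proper list coloring of $G_i$: edges inside $G_{i-1}$ are fine since $c'$ is proper; for an edge $v_i w$ with $w \in V(G_{i-1})$, we have $w \in N(v_i) \cap V(G_{i-1}) \subseteq L_{i-1} \subseteq X_{i-1}$, so $c(w) = z(w)$, and since $z \sim x$ and $v_i w$ is an edge with $v_i \in X_i$, $w \in X_{i-1}$, the compatibility condition gives $z(w) \neq x(v_i) = c(v_i)$; the list constraint at $v_i$ holds because $x \in \col(X_i)$. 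Finally $c|_{X_i} = x$: on $X_{i-1} \cap X_i$ this follows from $c|_{X_{i-1}} = z$ and $z \sim x$, on $v_i$ by construction, and these cover $X_i = L_i \cup \{v_i\} \subseteq X_{i-1} \cup \{v_i\}$.

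The two constructions are mutually inverse, and the terms on the right are over distinct $z$'s (a coloring $c'$ of $G_{i-1}$ determines its restriction $z$ uniquely), so the sum counts each preimage exactly once, giving $T_i[x] = \sum_{z \sim x} T_{i-1}[z]$. The only mild subtlety — and the place I would be most careful — is the claim that every neighbour of $v_i$ in $G_{i-1}$ lies in $X_{i-1}$: this is exactly where the definitions $L_{i-1} = \{v \in V(G_{i-1}) : vv_j \in E \text{ for some } j \geq i\}$ and $X_{i-1} = L_{i-1} \cup \{v_{i-1}\}$ are used, since $v_i$ has index $i > i-1$. Everything else is bookkeeping. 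Since the lemma is labelled folklore, a short paragraph giving exactly this bijection suffices.
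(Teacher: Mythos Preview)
Your proof is correct and follows essentially the same approach as the paper: both set up the bijection $c \mapsto (c|_{X_{i-1}}, c|_{V(G_{i-1})})$ and verify it using the ``cut property'' that every neighbour of $v_i$ in $V(G_{i-1})$ lies in $L_{i-1} \subseteq X_{i-1}$. You spell out the compatibility checks in somewhat more detail than the paper, but the argument is the same.
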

\begin{proof}
    We need to show that for $x\in \col(X_i)$,
	\[
	T_i[x]=\sum_{\substack{z\in \col(X_{i-1})\\z\sim x}} T_{i-1}[z].
	\]
	By definition of the $X_i$, $X_i\setminus X_{i-1}=\{v_i\}$ and so there are no edges between $V(G_{i-1})\setminus X_{i-1}$ and $X_{i}\setminus X_{i-1}$. This means a coloring of $G_{i-1}$ is compatible with $x$ if (and only if) its restriction to $X_{i-1}$ is compatible with $x$. This `cut property' is why the equation holds. 
	
	We now spell out the technical details. Recall that we write $\col(X_i)[x]$ for the set of colorings $z\in \col(X_i)$ that are compatible with $x$. We need to prove that
	\[
	|\col(G_i)[x]|=|\{(z,\phi_{i-1}):z\in \col(X_{i-1})[x],~\phi_{i-1}\in \col(G_{i-1})[z]\}|.
	\]
	Let $\phi\in \col(G_i)[x]$, that is, a list $q$-coloring of $G_i$ compatible with $x$. Then $z=\phi|_{X_{i-1}}$ is compatible with $x$ and $\phi_{i-1}=\phi|_{G_{i-1}}$ is compatible with $z$. So $\phi \mapsto (z,\phi_{i-1})$ maps from the set displayed on the left-hand side to the set displayed on the right-hand side. Its inverse is given by $(z,\phi_{i-1})\mapsto x\cup \phi_{i-1}$. This is well-defined: $x$ and $\phi_{i-1}$ are compatible because $z\in \col(X_{i-1})$ is compatible with $x$, and $\phi_{i-1}|_{X_{i-1}}=z$ (here we use the above mentioned `cut property').
\end{proof}
Since $|\col(X_i)|$ may be of size $q^{|X_i|}$, we cannot compute $T_i$ in its entirety within the claimed time bound. 
The idea of our algorithm is to use the same dynamic programming iteration, but to compute the values of $T_i$ only for a subset $\S_i'\subseteq \col(X_i)$ of the possible colorings which is of significantly smaller size. In fact, we will compute a function $T_i':\S_i'\to \F_p$ that does not necessarily agree with $T_i$ on $\S_i'$. The important property that we aim to maintain is that $T_i'$ carries the `same information' about the number of colorings modulo $p$ as $T_i$ does. This is formalised below. 

\begin{definition} 
Let $H=(X\cup Y, E)$ be a bipartite graph with color compatibility matrix $M$.
Let $T,T':\col(X)\to \F_p$. We say $T'$ is an \emph{$M$-representative} of $T$ if
\[
\sum_{x\in \col(X)} M[x,y]T[x]\equiv_p \sum_{x\in \col(X)}M[x,y]T'[x]  \text{ for all } y\in \col(Y).
\]
\end{definition}
In other words, $T'$ is an $M$-representative of $T$ if $M^\top\cdot T\equiv_p M^\top\cdot T'$.

Above we left the lists and the integer $q$ implicit. We recall that the color compatibility matrix has entries $M[x,y]=1$ if $x\in \col(X)$ and $y\in \col(Y)$ are compatible, and $M[x,y]=0$ otherwise. Let $i\in [n-1]$ be given. Let $M_i$ be the color compatibility matrix of the bipartite graph given by the edges between $X_i$ and $Y_i$. 

Then for $y\in \col(Y_i)$,
\[
\sum_{x\in \col(X_i)}M_i[x,y] T_i[x]
\]
gives the number of colorings of $G_i$ compatible with $y$. 
If we can compute $T_{n-1}'$ that is an $M_{n-1}$-representative of $T_{n-1}$, then by Lemma~\ref{lem:folkloreDP} we can compute the number of $q$-list colorings of the graph (modulo $p$) as 
\[
\sum_{y\in \col(G[v_{n}])} \sum_{x\in \supp(T_{n-1}')}M_{n-1}[x,y] T_{n-1}'[x].
\]
It is an exercise in linear algebra to show that there always exists a $T'$ that $M$-represents $T$ with $|\supp(T')|\leq \rank(M)$. We also need to make sure that we can actually compute this $T'$ within the desired time complexity and therefore reduce the support in a slightly more complicated fashion in Section \ref{subsec:reduce}. We then prove an analogue of Lemma \ref{lem:folkloreDP} in Section \ref{subsec:iterate}, and describe our final algorithm in Section \ref{subsec:alg}.

\subsection{Computing a reduced representative}
\label{subsec:reduce}
In this subsection, we show how to find a function $T'$ that $M$-represents $T$, while decreasing an upper bound on the size of the support of the function.
\begin{definition}
	For a function $f: \col(X) \rightarrow \F_p$ we say that $r\in X$ is a \emph{reduced vertex} if $f(c)=0$ whenever $c(r)=q$. 
\end{definition}

The link between reduced vertices and the support of $T:\col(X)\to \F_p$ is explained as follows.
If $R$ is a set of reduced vertices of $T$, then we can compute a set of colorings containing the support of $T$ of size at most $(q-1)^{|R|}q^{|X|-|R|}$. Indeed, we may restrict to the colorings that do not assign the color $q$ to any vertex in $R$. 

The following result allows us to turn vertices of degree $1$ in $H$ into reduced vertices. The assumption that the vertex has degree $1$ will be useful in proving the result because it implies the associated compatibility matrix can be written as a Kronecker product with $J_q$ and another matrix.
\begin{lemma}\label{lem:repset}
	There is an algorithm \textbf{Reduce} that, given a bipartite graph $H$ with parts $X, Y$ and associated color compatibility matrix $M$,  a function $T: \col(X) \rightarrow \F_p$ with reduced vertices $R \subseteq X$ and a vertex $v \in X \setminus R$ of degree $1$ in $H$, outputs a function $T':\col(X)\to \F_p$ with reduced vertices $R \cup \{v\}$ that is an $M$-representative of $T$. The run time is in $O((q-1)^{|R|}q^{|X| - |R|})$.
\end{lemma}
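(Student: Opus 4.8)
\textbf{Proof plan for Lemma~\ref{lem:repset}.}

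The plan is to exploit the degree-$1$ assumption on $v$ to factor the compatibility matrix $M$ through a Kronecker product, and then to perform a single ``local'' linear-algebraic reduction at $v$ that kills the color-$q$ entries while preserving everything $M$ sees. Let $u$ be the unique neighbour of $v$ in $Y$. First I would reorganize the index sets: write every coloring $x \in \col(X)$ as a pair $(i, x')$, where $i = x(v) \in [q]$ and $x'$ is the restriction of $x$ to $X \setminus \{v\}$, and similarly write $y \in \col(Y)$ as $(j, y')$ with $j = y(u)$. Because $v$ has degree $1$, the compatibility condition $x \sim y$ decomposes as: $i \neq j$ (the single edge $uv$) \emph{and} $x' \sim y'$ in the graph $H - v$. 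Hence, after this reindexing, $M$ is (a permutation of) the Kronecker product $J \otimes M'$, where $J = J_1$ is the $q \times q$ matrix with zero diagonal and one off-diagonal from Definition~\ref{def:ccmatrix}, and $M'$ is the color compatibility matrix of $H - v$ with parts $X \setminus \{v\}$ and $Y$ (one should be slightly careful that deleting $v$ does not change $\col(Y)$, which is fine as $Y$ is untouched; the edge set of $H-v$ is $E \setminus \{uv\}$).

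The second step is the heart of the argument: find a $q \times q$ matrix $Q$ over $\F_p$ such that $Q^\top J = J$ (so that multiplying the ``$v$-coordinate'' of $T$ by $Q$ does not change $M^\top T$) and such that $Q$ maps every vector to one supported on coordinates $\{1,\dots,q-1\}$ (so that the color $q$ is eliminated at $v$). Concretely, viewing $J = \mathbf{1}\mathbf{1}^\top - I$, one checks that the vector $e_q$ (indicator of color $q$) satisfies $J e_q = \mathbf{1} - e_q = J e_i + (e_i - e_q)$ for any $i$, and crucially $J(e_i - e_q) = e_q - e_i = -(e_i - e_q)$; more usefully, since $p \mid q-1$ we have $J \mathbf{1} = (q-1)\mathbf{1} \equiv_p 0$, so $\mathbf{1}$ is in the kernel of $J^\top$ as well. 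This lets me replace the color-$q$ entry: define $Q$ to act as the identity on coordinates $1,\dots,q-1$ and to send $e_q \mapsto -\sum_{i=1}^{q-1} e_i = e_q - \mathbf{1}$ (a vector supported on $[q-1]$, using $p \mid q-1$ only to guarantee... actually $e_q - \mathbf 1 = -(e_1 + \dots + e_{q-1})$ is automatically supported on $[q-1]$). Then $Q^\top$ acts on rows, and since $\mathbf{1}$ lies in $\ker J$ (as $p \mid q-1$), the row operation ``add $-\mathbf 1$ worth of the color-$q$ row into the others'' — equivalently left-multiplication by $Q^\top$ on the $v$-coordinate of $J$ — leaves $J$ unchanged modulo $p$. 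Applying $Q$ to the $v$-coordinate of $T$ (i.e.\ setting $T' = (Q \otimes I) T$ under the identification above) then yields a function with $v$ a reduced vertex, and $M^\top T' = (J \otimes M')^\top (Q \otimes I) T = ((J^\top Q) \otimes M'^\top) T = (J^\top \otimes M'^\top) T = M^\top T$, so $T'$ is an $M$-representative of $T$. Since $Q$ is a permutation composed with one elementary-type operation and only touches the $v$-coordinate, the previously reduced vertices in $R$ stay reduced: the operation only combines colorings that agree outside $\{v\}$, so it cannot introduce a nonzero value on a coloring assigning color $q$ to some $r \in R$.

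For the running time: the input need not be given as the full table on $\col(X)$ (which could be of size $q^{|X|}$), but rather we maintain $T$ supported on the colorings avoiding color $q$ on $R$, of which there are at most $(q-1)^{|R|} q^{|X| - |R|}$. Computing $T'$ amounts to iterating over these colorings, and for each value with $x(v) = q$ redistributing it (with coefficient $-1$) onto the $q-1$ colorings obtained by changing $x(v)$ to each value in $[q-1]$; this is $O((q-1)^{|R|} q^{|X|-|R|})$ field operations, as claimed, and the output is naturally supported on colorings avoiding color $q$ on $R \cup \{v\}$.

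\textbf{Main obstacle.} The conceptual content — and the step I expect to need the most care — is pinning down the matrix $Q$ (equivalently, the elementary operation at $v$) and verifying the identity $Q^\top J \equiv_p J$; this is exactly where the hypothesis $p \mid q-1$ is consumed, via $J\mathbf 1 \equiv_p 0$. A secondary fussy point is bookkeeping the Kronecker factorization cleanly when $Y$ may contain vertices of degree $0$ after deleting $v$ or when $u$ has other neighbours: the factorization $M = J \otimes M'$ is with respect to splitting \emph{only} the $v$-coordinate on the $X$-side against the $u$-coordinate on the $Y$-side, so one must be careful that the off-diagonal blocks of $M$ (indexed by $x(v) = i \neq j = y(u)$) are all literally equal to $M'$, which holds precisely because $v$'s only edge is $uv$. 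Everything else is routine.
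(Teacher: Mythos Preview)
Your $T'$ coincides exactly with the paper's: unpacking $(Q\otimes I)T$ with your $Q$ gives $T'[x]=0$ when $x(v)=q$ and $T'[x]=T[x]-T[x']$ otherwise (where $x'$ is $x$ with $v$ recoloured to $q$), which is precisely the formula the paper writes down and then verifies by a direct counting argument rather than matrix identities. So the approach is essentially the same, just packaged more abstractly.

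One technical point does need correction. The factorization $M=J\otimes M'$ is not in general true: if $u$ has another neighbour $w\in X\setminus\{v\}$, then the $(i,j)$-block of $M$ (rows with $x(v)=i$, columns with $y(u)=j$) depends on $j$ through the constraint $x'(w)\neq j$, so the off-diagonal blocks are \emph{not} all equal to a single $M'$, and your justification ``which holds precisely because $v$'s only edge is $uv$'' is mistaken. What the degree-$1$ hypothesis on $v$ actually buys you is only the one-sided factorization
\[
M[(i,x'),\,y]\;=\;J[i,\,y(u)]\cdot N[x',y]
\]
for some matrix $N$ independent of $i$. This is weaker than a Kronecker product but is exactly what you need: summing over $i$ first and using $Q^\top J=J$ still yields $M^\top(Q\otimes I)T=M^\top T$. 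Your displayed chain invoking $(J\otimes M')^\top$ should be replaced by this computation; the paper sidesteps the whole issue by checking the $M$-representative property entry-by-entry.
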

\begin{proof}
Let $H$ a bipartite graph with parts $X, Y$ and associated color compatibility matrix $M$. Let $T: \col(X) \rightarrow \F_p$ be a function with reduced vertices $R \subseteq X$ and let vertex $v \in X \setminus R$ be a vertex of degree $1$ in $H$. We need to find a function $T':\col(X)\to \F_p$ with reduced vertices $R \cup \{v\}$ that is $M$-representative of $T$. (And need to show this can be done in time $O((q-1)^{|R|}q^{|X| - |R|})$.)

We may restrict to colorings $x$ that do not assign value $q$ to any element of $R$. There are at most $(q-1)^{|R|}q^{|X\setminus R|}$ such colorings. 
	We set 
	\[
	T'[x]=
	\begin{cases}
	0, & \text{if } x(v)=q, \\
	T[x]-T[x'], \text{ where $x'$ is obtained from $x$ by changing the value of $v$ to $q$}, & \text{otherwise}.
	\end{cases}
	\]
	This computation is done in time linear in the number of the colorings $x$ we consider, so the running time is in $O((q-1)^{|R|}q^{|X\setminus R|})$.
	
	First we will show that $T'$ is an $M$-representative of $T$.
	Let $y\in \col(Y)$ be a coloring of the right hand side of the bipartite graph $H$. We need to show that
	\[
	\sum_{\substack{x\in \col(X) \\ x\sim y}} T[x] \equiv_p
	\sum_{\substack{x\in \col(X) \\ x\sim y}} T'[x].
	\]
By definition,
	\[
	\sum_{\substack{x\in \col(X) \\ x\sim y}} T'[x] = 
	\sum_{\substack{x\in \col(X) \\ x\sim y \\ x(v)=q}} 0+
	\sum_{\substack{x\in \col(X) \\ x\sim y\\ x(v) \neq q}} T[x] -  T[x'].
	\]
	Thus it remains to show that 
	\[
	\sum_{\substack{x\in \col(X) \\ x\sim y\\ x(v) \neq q}} -T[x'] \equiv_p \sum_{\substack{x\in \col(X) \\ x\sim y\\ x(v) = q}} T[x].
	\] 
	Let $x\in \col(X)$ with $x(v)= q$.
	We show the equality by proving that the number of times $T[x]$ appears on the left hand side equals the number of times $T[x]$ appears on the right hand side, modulo $p$.
	Let $w\in Y$ be the unique neighbor of the vertex $v$. 
	
	First assume that $x\sim y$. Then $y(w)\neq q$. If we adjust $x$ to the coloring $x_i$, which is equal to $x$ apart from assigning color $i$ to $v$ instead of $q$, then $x_i\sim y$ if and only if $i\neq y(w)$.
Hence the term $-T[x]$ appears $q-2$ times on the left hand side, and $T[x]$ appears once on the right hand side. Since $p$ divides $q-1$, we find $q-2\equiv_p -1$ and hence both contributions are equal modulo $p$.
	
If $x\not\sim y$, then either $x$ does not appear on both sides (because $x|_{X\setminus\{v\}}$ is already incompatible with $y$) or $y(w)=q$. If $y(w)=q$, then the term $T[x]$ appears $q-1\equiv_p 0$ times on the left hand side by a similar argument as the above, and does not appear on the right hand side. This shows the claimed equality and finishes the proof.
\end{proof}
We say that a function $T:\col(X)\to \F_p$ is \emph{fully reduced} if every vertex $v\in X$ of degree 1 is a reduced vertex of $T$.
In order to keep the running time low, we will ensure that $R$ is relatively large whenever we apply Lemma \ref{lem:repset}.
\subsection{Computing $T_i'$ from $T_{i-1}'$}
\label{subsec:iterate}
Recall that $T_i[x]$ gives the number of colorings of $G_i$ that are compatible with $x\in \col(X_i)$ and that $M_i$ is the color compatibility matrix of the bipartite graph between $X_i$ and $Y_i$ (corresponding to the $i$th cut). 
\begin{lemma}\label{lem:iter}
	Let $i\in [n-1]$.
	Suppose that $T'_{i-1}$ is an $M_{i-1}$-representative of $T_{i-1}$ and that $T_{i-1}'$ is fully reduced. Given $T_{i-1}'$ and a set $R_{i-1}$ of reduced vertices for $T_{i-1}'$, we can compute a function $T_i'$ that is an  $M_{i}$-representative of $T_i$ in time $O((q-1)^{|R_{i-1}|}q^{|X_{i-1}|-|R_{i-1}|+1})$, along with a set $R_i$ of reduced vertices for $T_i'$ such that $|X_{i}\setminus R_i|\leq (\cw -|R_i|)/2+1$.
\end{lemma}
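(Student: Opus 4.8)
The plan is to compute $T'_i$ from $T'_{i-1}$ in two phases, mirroring the folklore dynamic program of Lemma~\ref{lem:folkloreDP} while keeping the tables small. In the \emph{lift} phase I set $\tilde T_i[x]=\sum_{z\in\col(X_{i-1}),\,z\sim x}T'_{i-1}[z]$ for $x\in\col(X_i)$, i.e.\ I apply the operation of Lemma~\ref{lem:folkloreDP} to $T'_{i-1}$ in place of $T_{i-1}$. Since $X_i\setminus X_{i-1}=\{v_i\}$ and every neighbour of $v_i$ among $v_1,\dots,v_{i-1}$ lies in $X_{i-1}$, each $z\in\col(X_{i-1})$ is compatible with at most $q$ colourings $x\in\col(X_i)$, so $\tilde T_i$ (stored on its support) is computed by looping over $z\in\supp(T'_{i-1})$; as $R_{i-1}$ is a set of reduced vertices, $|\supp(T'_{i-1})|\le(q-1)^{|R_{i-1}|}q^{|X_{i-1}|-|R_{i-1}|}$, giving the claimed running time for this phase (up to polynomial factors, suppressed throughout). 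In the \emph{reduce} phase I call \textbf{Reduce} (Lemma~\ref{lem:repset}) once for every vertex of $X_i$ of degree $1$ in the bipartite graph underlying $M_i$ that is not yet reduced, ending with a fully reduced $T'_i$ and reduced set $R_i=(R_{i-1}\cap X_i)\cup D_i$, where $D_i$ is the set of degree-$1$ vertices of that graph. Using that $T'_{i-1}$ is fully reduced and that a vertex's degree drops by at most one from cut $i-1$ to cut $i$, only $O(\cw)$ calls occur; the reduced set always contains $R_{i-1}\cap X_i$, and since $|X_i|=|X_{i-1}|-|X_{i-1}\setminus X_i|+1$ and $R_i\subseteq X_i$, a short estimate of the exponents bounds each call — and hence the whole phase — by $(q-1)^{|R_{i-1}|}q^{|X_{i-1}|-|R_{i-1}|+1}$.

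For correctness, each \textbf{Reduce} call preserves being an $M_i$-representative and keeps previously reduced vertices reduced (Lemma~\ref{lem:repset}), and the vertices of $R_{i-1}\cap X_i$ are already reduced for $\tilde T_i$: if $x(r)=q$ for such an $r$ then every $z\sim x$ has $z(r)=q$, so $T'_{i-1}[z]=0$ and $\tilde T_i[x]=0$. Thus it suffices to show $\tilde T_i$ is an $M_i$-representative of $T_i$. By Lemma~\ref{lem:folkloreDP}, $T_i$ is the same lift applied to $T_{i-1}$, so $\tilde T_i-T_i$ is the lift of $D:=T'_{i-1}-T_{i-1}$, and for every $y\in\col(Y_i)$
\[
\sum_{x\in\col(X_i)}M_i[x,y]\bigl(\tilde T_i[x]-T_i[x]\bigr)=\sum_{z\in\col(X_{i-1})}D[z]\,W_y[z],\qquad W_y[z]:=\sum_{x\in\col(X_i):\,z\sim x}M_i[x,y].
\]
Since $T'_{i-1}$ is an $M_{i-1}$-representative of $T_{i-1}$, $\sum_z D[z]M_{i-1}[z,y']\equiv_p0$ for all $y'\in\col(Y_{i-1})$; hence it is enough to exhibit $W_y$ as a sum of columns of $M_{i-1}$.

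The key identity is $W_y[z]=\sum_{c\in C(y)}M_{i-1}[z,y'_c]$, where $C(y)\subseteq L(v_i)$ is the set of colours that $y$ does not use on a neighbour of $v_i$ in $\{v_{i+1},\dots,v_n\}$, and $y'_c\in\col(Y_{i-1})$ agrees with $y$ on $Y_{i-1}\setminus\{v_i\}$ and colours $v_i$ by $c$ when $v_i\in Y_{i-1}$. Proving this identity is the main obstacle; it rests on structural facts about the cutwidth ordering that I would check directly: $X_i\setminus X_{i-1}=\{v_i\}$ and $N(v_i)\cap X_i\subseteq N(v_i)\cap X_{i-1}$; the vertices of $X_{i-1}\setminus X_i$ have no neighbour in $\{v_{i+1},\dots,v_n\}$; and $Y_i=(Y_{i-1}\setminus\{v_i\})\cup Q$ is a disjoint union, where $Q$ consists of the neighbours of $v_i$ in $\{v_{i+1},\dots,v_n\}$ having no neighbour in $X_{i-1}$ (so $Q$ sends no edge to $X_{i-1}$). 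With these, a colouring $x\in\col(X_i)$ compatible with a fixed $z$ is determined by $c:=x(v_i)\in L(v_i)$ (together with $x|_{X_i\cap X_{i-1}}=z|_{X_i\cap X_{i-1}}$), and for such $x$ the condition ``$z\sim x$ and $x\sim y$'' becomes precisely ``$c\in C(y)$ and $M_{i-1}[z,y'_c]=1$'' — the constraints relating $z$ and $v_i$ collapse because $N(v_i)\cap X_i\subseteq N(v_i)\cap X_{i-1}$, and those relating $y$ and $v_i$ split off using the description of $Y_i$. Summing over $c$ gives the identity, so $W_y$ is a $0/1$-combination of columns of $M_{i-1}$, completing the correctness proof.

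Finally, for the size bound: every vertex put into $R_i$ lies in $L_i$ (a degree-$1$ vertex of the graph underlying $M_i$ has a neighbour after position $i$; a vertex of $R_{i-1}\cap X_i$ lies in $X_i\cap\{v_1,\dots,v_{i-1}\}\subseteq L_i$), so $R_i\subseteq L_i\subseteq X_i$. Since $T'_i$ is fully reduced, every vertex of $L_i\setminus R_i$ has degree $\ge2$ in that graph and every vertex of $R_i$ has degree $\ge1$, so from the fact that the graph has at most $\cw$ edges we get $\cw\ge\sum_{v\in L_i}\deg(v)\ge 2|L_i\setminus R_i|+|R_i|$. As $X_i\setminus R_i\subseteq(L_i\setminus R_i)\cup\{v_i\}$, this gives $|X_i\setminus R_i|\le(\cw-|R_i|)/2+1$.
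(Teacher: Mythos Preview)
Your argument is correct, and the heart of it --- the lift $\tilde T_i[x]=\sum_{z\sim x}T'_{i-1}[z]$ together with the identity expressing $W_y$ as a sum of columns $M_{i-1}[\,\cdot\,,y'_c]$ of $M_{i-1}$ --- is exactly what the paper does (there it is written as the chain of equalities passing through $M_{i-1}[z,(f_c\cup y)|_{Y_{i-1}}]$). Two points of genuine divergence are worth noting.

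First, you fold the \textbf{Reduce} calls into the lemma, whereas the paper does \emph{not}: the paper outputs the bare lift as $T'_i$ and instead defines $R_i$ \emph{structurally} as $X_i\setminus(A_i\cup B_i\cup\{v_i\})$, where $A_i$ are the vertices with at least two neighbours in $Y_i$ and $B_i$ those with exactly one neighbour in $Y_i$ and an edge to $v_i$. It then argues that every $r\in R_i$ already had degree~$1$ in the cut~$i-1$ bipartite graph, so the hypothesis ``$T'_{i-1}$ is fully reduced'' forces $r$ to be reduced for $T'_{i-1}$ and hence for the lift. This buys the paper the stated running time on the nose (no hidden $|X_i|$ factor), while your approach needs the ``up to polynomial factors'' caveat.

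Second, the size bound is obtained differently: you count degrees in the cut-$i$ bipartite graph (after Reduce every vertex of $L_i\setminus R_i$ has degree $\ge 2$), while the paper counts edges in cut $i-1$, using that every vertex of $A_i\cup B_i$ has at least two edges there. Both yield $|X_i\setminus R_i|\le(\cw-|R_i|)/2+1$; your argument is arguably the more transparent of the two.
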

\begin{proof}
	Let $i\in [n-1]$ and let $T'_{i-1}$ be $M_{i-1}$-representative of $T_{i-1}$ and fully reduced, with $R_{i-1}$ a set of reduced vertices for $T_{i-1}'$. 
	We need to compute (in time $O((q-1)^{|R_{i-1}|}q^{|X_{i-1}|-|R_{i-1}|+1})$) a function $T_i'$ that is $M_{i}$-representative of $T_i$, along with a set $R_i$ of reduced vertices for $T_i'$, such that $|X_{i}\setminus R_i|\leq (\cw -|R_i|)/2+1$.
	
	We will work over $\mathbb{F}_p$ during this proof, in particular abbreviating $\equiv_p$ to $=$.
	Analogous to Lemma \ref{lem:folkloreDP}, we define, for $x\in \col(X_i)$,
	\begin{equation}
		\label{eq:Tprime}
		T'_{i}[x] = \sum_{\substack{z \in \col(X_{i-1})\\z\sim x}} T'_{i-1}[z].
	\end{equation}
	Note that 
	\[
	\sum_{\substack{z \in \col(X_{i-1})\\z\sim x}} T'_{i-1}[z]=\sum_{\substack{z \in \supp(T_{i-1}')\\z\sim x}} T'_{i-1}[z].
	\]
	We compute $T'_i$ from $T'_{i-1}$ as follows. Let 
	\[
	\S_{i-1}'=\{c\in \col(X_{i-1}):c(r)\neq q \text{ for all }r\in R_{i-1}\}.
	\]
	By the definition of reduced vertex, $\S_{i-1}'$ contains the support of $T_{i-1}'$ since $T'_{i-1}$ is fully reduced. 
	Recall that $X_i\setminus \{v_i\} \subseteq X_{i-1}$, so any $x\in \col(X_i)$ is determined if we provide colors for the vertices in $X_{i-1}\cup \{v_i\}$.
	For a color $c\in [q]$, let $f_c:\{v_i\}\to \{c\}$ be the function that assigns color $c$ to $v_i$. 
	For each $z\in \S_{i-1}'$, for each $c\in [q]$ for which $z\sim f_c$, we compute
	\[
	x=(z\cup f_c)|_{X_i}\in \col(X_i)
	\]
	and increase $T_i'[x]$ by $T_{i-1}'[z]$ if it has been defined already, and initialise it to $T_{i-1}'[z]$ otherwise. The remaining values are implicitly defined to 0. The running time is as claimed because $|\S_{i-1}'|\leq (q-1)^{|R_{i-1}|}q^{|X_{i-1}|-|R_{i-1}|}$ and $|[q]|\leq q$.
	
	Next, we compute a set $R_i$ of reduced vertices for $T_i'$. 
	We set $R_i=X_i\setminus (A_i\cup B_i\cup \{v_i\})$, where
	\[
	A_i=\{u\in X_i\setminus \{v_i\}\: : \: |N(u)\cap Y_i| \geq 2\}
	\]
	and
	\[
	B_i=\{u\in X_i \setminus \{v_i\}\: :\: |N(u)\cap Y_i|=1  \text{ and } uv_i\in E \}.
	\]
	It is easy to see that $A_i$ and $B_i$ are disjoint. Within the $(i-1)$th cut, each vertex in $A_i\cup B_i$ has at least two edges going across the cut, so $|R_i|+2|A_i|+2|B_i|\leq \cw$. Therefore, $|X_i\setminus R_i|\leq (\cw-|R_i|)/2+1$. 
	
	We now show that $R_i$ is indeed a set of reduced vertices. Suppose not, and let $r \in R_i$ and $c\in \col(X_i)$ with $c(r)=q$ yet $T_{i}'[c]\neq 0$. Since $T_i'[c]\neq 0$, there exists $z\in \col(X_{i-1})$ with $z\sim c$ and $T_{i-1}'[z]\neq 0$. By definition $r\in X_i\setminus\{ v_i\}\subseteq X_{i-1}$. Moreover, $z(r)=q$ since $z\sim c$ and $c(r)=q$. Therefore $r$ is not reduced for $T_{i-1}'$. We now show $r$ moreover has degree $1$ in the bipartite graph between $X_{i-1}$ and $Y_{i-1}$ (corresponding to the $(i-1)$th cut), contradicting our assumption that  $T_{i-1}'$ is fully reduced. Since $r\not\in A_i\cup B_i$, it has at most one edge going over the $(i-1)$th cut. Moreover, $r\in X_{i}\setminus \{v_i\}\subseteq L_{i}$, and so it has at least one edge to $Y_i\subseteq Y_{i-1}$. So $r$ has exactly one neighbor in $Y_{i-1}$.
	
	It remains to prove that $T'_i$ is $M_i$-representative of $T_i$. 
 All equalities and computations below take place modulo $p$. Let $y\in \col(Y_i)$. We need to show that
	\begin{equation}
		\label{eq:represents}
		\sum_{x\in \col(X_i)}  M_i[x,y] T_i[x]= \sum_{x\in \col(X_i)} M_i[x,y] T'_i[x].
	\end{equation}
	By Lemma \ref{lem:folkloreDP}, for all $x\in \col(X_i)$,
	\[
		T_{i}[x] = \sum_{z \in \col(X_{i-1})[x]} T_{i-1}[z],
	\]
and we crucially use in the computation below that we used the same expression when definining $T_i'$, which will allow us to exploit the fact that $T_{i-1}'$ is an $M_{i-1}$-representative of $T_{i-1}$.
We find
	\begin{align*}
		\sum_{x\in \col(X_i)}  M_i[x,y]T_i[x]
		&=  \sum_{x\in \col(X_i)}  \left(\sum_{z\in \col(X_{i-1})[x]} T_{i-1}[z] \right) M_i[x,y]\\ 
		&= \sum_{z\in \col(X_{i-1})}  \sum_{\substack{x\in \col(X_i)\\x\sim z\\ x\sim y}} T_{i-1}[z].
		\intertext{In the second sum, recall that each $x\in \col(X_i)$ compatible with $z$ is of the form $x=(z\cup f_c)|_{X_i}$ with $f_c,z$ compatible and $f_c:\{v_i\}\to \{c\}$ for some $c\in [q]$. We find that $x$ is compatible with $y$ if and only if $f_c$ is compatible with $y$ and $z$ is compatible with $y$, so the previous expression equals }
		&= \sum_{z\in \col(X_{i-1})}  \sum_{f_c\in \col(\{v_i\})} T_{i-1}[z]
		1_{z\sim f_c} 1_{z\sim y} 
		1_{f_c,\sim y }\\
		&= \sum_{f_c\in \col(\{v_i\})}
		1_{f_c\sim y}
		\sum_{z\in \col(X_{i-1})} 
		T_{i-1}[z]
		1_{z\sim f_c} 
		1_{z\sim y},\\
		\intertext{where $1_{b}$ denotes the indicator variable which is equal to $1$ if condition $b$ is true and equals $0$ otherwise. Since $Y_{i-1}\subseteq Y_i\cup \{v_i\}$, we can consider $y'=(f_c\cup y)|_{Y_{i-1}}$ and find $M_{i-1}[z,y']=1$ exactly when $z$ is compatible with both $f_c$ and $y$. Using this to rewrite the previous expression, and then noting that $y'\in \col(Y_{i-1})$ and $T_{i-1}'$ is a $M_{i-1}$-representative of $T_{i-1}$, we find}
		\sum_{x\in \col(X_i)}  M_i[x,y]T_i[x] &=\sum_{f_c\in \col(\{v_i\})}
		1_{f_c\sim y}
		\sum_{z\in \col(X_{i-1})} 
		T_{i-1}[z] M_{i-1}[z,(f_c\cup y)|_{Y_{i-1}}]\\
		&= \sum_{f_c\in \col(\{v_i\})}
		1_{f_c\sim y}
		\sum_{z\in \col(X_{i-1})} 
		T'_{i-1}[z] M_{i-1}[z,(f_c\cup y)|_{Y_{i-1}}],\\
		\intertext{which for the same reasons as above }
		&= \sum_{f_c\in \col(\{v_i\})}
		1_{f_c\sim y}
		\sum_{z\in \col(X_{i-1})} 
		T'_{i-1}[z]1_{z\sim f_c} 
		1_{z\sim y}\\
		&= \sum_{z\in \col(X_{i-1})}  \sum_{f_c\in \col(\{v_i\})} T'_{i-1}[z]
		1_{z\sim f_c} 1_{z\sim y} 
		1_{f_c\sim y}\\
		&= \sum_{z\in \col(X_{i-1})}  \sum_{\substack{x\in \col(X_{i})\\x\sim z\\ x\sim y}} T'_{i-1}[z]\\
		&=  \sum_{x\in \col(X_i)}  \left(\sum_{z\in \col(X_{i-1})[x]} T'_{i-1}[z] \right) M_i[x,y]\\ 
 		&=\sum_{x\in \col(X_{i})}  M_i[x,y]T'_i[x].
	\end{align*}
	This shows (\ref{eq:represents}) and completes the proof.
\end{proof}

\subsection{Analysis of final algorithm}
\label{subsec:alg}
We initialize $T_1 = \textbf{1}$, the all-ones vector. Indeed, each $x\in \col(\{v_1\})$ has a unique extension to $G_1$ (namely itself). We then repeatedly apply the \textbf{Reduce} algorithm from Lemma \ref{lem:repset} until we obtain a fully reduced function $T_1'$ that is an $M_1$-representative of $T_1$, with some set of reduced vertices $R_1$. For $i=2,\dots,n$, we repeat the following two steps.
\begin{enumerate}
	\item Apply Lemma \ref{lem:iter} with inputs $(T_{i-1}',R_{i-1})$ in order to obtain the vector $T'_{i}$ that is an $M_i$-representative of $T_i$, and a set of reduced vertices $R_i$ for $T'_i$. 
	\item While $X_i\setminus R_i$  has a vertex $v$ of degree 1, apply the \textbf{Reduce} algorithm from Lemma \ref{lem:repset} to $(T'_{i},R_{i})$, and add $v$ to $R_i$. 
\end{enumerate}
At the end of step 2,  we obtain a fully reduced function $T_i'$ that is an $M_i$-representative of $T_i$. Moreover, the set $R_i$ of reduced vertices has only increased in size compared to the set we obtained in step 1. We apply Lemma \ref{lem:repset} at most $|X_i|$ times in the second step.

We eventually compute $T'_{n-1}$ that is an $M_{n-1}$-representative of $T_{n-1}$ with a fully reduced set $R_{n-1}$. We output
\[
\sum_{y\in \col(Y_{n-1})}   \sum_{x\in \col(X_{n-1})} T'_{n-1}[x]M_{n-1}[x,y].
\]
Since $T_{n-1}'$ is an $M_{n-1}$-representative of $T_{n-1}$, this gives the number of list colorings of $G$ modulo $p$. We may compute the expression above efficiently by  reducing the second summation to the colorings in 
\[
\S'_{n-1}=\{c\in \col(X_{n-1}):c(r)\neq q \text{ for all }r\in R_{n-1}\}.
\]

The total running time is now bounded by
\[
C \sum_{i=1}^{n-1}|X_i|(q-1)^{|R_i|}q^{|X_i|-|R_i|}
\]
for some constant $C>0$. By Lemma \ref{lem:iter}, $|X_i\setminus R_i|\leq  (\cw-|R_i|)/2+1$ for all $i\in [n-1]$. For $q\geq 3$, $q^{1/2}<q-1$ and so 
\[
(q-1)^{|R_i|}q^{|X_i|-|R_i|}\leq q(q-1)^{|R_i|}(q^{1/2})^{\cw-|R_i|} < q(q-1)^{\cw}.
\]
This shows the total running time is of order $(q-1)^{\cw}n^{O(1)}$. This finishes the proof of Theorem \ref{thm:main_upperboundcol}.

\section{Lower bounds}
\label{sec:collow}
There exists an efficient reduction from \textsc{SAT} to the problem $\#_p$\textsc{SAT} of counting the number of satisfying assignments for a given boolean formula modulo $p$ \cite{calabro2008complexity}. There also exists a reduction from \textsc{SAT} to \textsc{CSP}$(q, r)$, which preserves the number of solutions \cite{focke2022counting}. Putting these two together gives a reduction from \textsc{SAT} to $\#_p$\textsc{CSP}$(q, r)$.

In this section we give a reduction from  $\#_p$\textsc{SAT} to  $\#_p$\textsc{List $q$-Coloring}, the problem of counting the number of valid list $q$-colorings of a given graph $G$ with color lists $(L_v)_{v \in V(G)}$. We use this to conclude the lower bounds of Theorem \ref{thm:main} and Theorem \ref{thm:main_lbcse}.

\subsection{Controlling the number of extensions modulo $p$}
Our main gadget can be attached to a given set of vertices, and has the property that for each precoloring of the `glued on' vertices, there is a specified number of extensions. This is made precise in the result below.

\begin{theorem} \label{thm:ColGad}
Let $k \in \mathbb{N}$ and $f : [q]^k \to \mathbb{N}$. There exist a graph $G_f$, a set of vertices $B = \{b_1, \dots, b_k\} \subseteq V(G_f)$ of size $k$ and lists $(L_v)_{v\in V(G_f)}$, such that for any  $\alpha \in [q]^k$, there are exactly $f(\alpha)$ list $q$-colorings $c$ of $G_f$ with $c(b_i) = \alpha(i)$ for all $i\in [k]$. Additionally,  $|V(G_f)| \leq20kq^{k+1}\max(f)$ 
and $G_f$ has cutwidth at most $6kq^{k+2}$.
\end{theorem}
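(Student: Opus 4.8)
The goal is to build a gadget graph $G_f$ with $k$ "boundary" vertices $B = \{b_1,\dots,b_k\}$ such that every precoloring $\alpha \in [q]^k$ of $B$ extends to exactly $f(\alpha)$ list colorings. My plan is to construct this in two layers. First, I would build, for a single target value $t \in \mathbb{N}$ and a single target tuple $\alpha_0 \in [q]^k$, an \emph{equality gadget} $E_{\alpha_0,t}$ that contributes exactly $t$ extensions when the boundary is colored $\alpha_0$ and exactly $1$ (or some other fixed, controlled number) of extensions for every other boundary coloring. The standard trick is to introduce a fresh vertex $w$ with a list of size $t$ (say $L(w) = \{1,\dots,t\}$ after padding $q$ if necessary --- more precisely, use auxiliary vertices whose only role is to have a free choice among $t$ colors), attached through a path/tree of "selector" vertices to the $b_i$'s so that the free choice at $w$ is available precisely when $c(b_i) = \alpha_0(i)$ for all $i$, and is forced to a single value otherwise. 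To detect the conjunction "$c(b_i) = \alpha_0(i)$ for all $i$" I would chain $k$ small comparison gadgets: each checks one coordinate against $\alpha_0(i)$ using an edge plus list restrictions (an edge $b_i u$ with $L(u) = [q]\setminus\{\alpha_0(i)\}$ of size $q-1$ is the basic building block, cf.\ the base case discussed after Lemma~\ref{lem:rankComp}), and feeds a one-bit "pass/fail" signal forward.

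Second, with a family of such single-tuple gadgets in hand, I would combine them over all $\alpha_0 \in [q]^k$. The cleanest route is \emph{additive}: realize $f = \sum_{\alpha_0 \in [q]^k} f(\alpha_0)\cdot \mathbf{1}_{\alpha_0}$ by taking a disjoint union of gadgets, one contributing $f(\alpha_0)$ extensions at $\alpha_0$, but since list colorings multiply over disjoint unions rather than add, I instead need each single-tuple gadget to contribute the \emph{correct multiplicative factor}: gadget for $\alpha_0$ should contribute $f(\alpha_0)$ when the boundary equals $\alpha_0$ and contribute $1$ otherwise, and then the product over all $q^k$ gadgets (all sharing the boundary $B$) gives exactly $f(\alpha)$ at every $\alpha$. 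So the real requirement on the single-tuple gadget $E_{\alpha_0,t}$ is: exactly $t$ extensions at $\alpha_0$, exactly $1$ extension at every other $\alpha$. This "$t$-or-$1$" behaviour is precisely what the selector-plus-free-vertex construction above delivers if built carefully: when the conjunction fails, every auxiliary vertex is forced, giving a unique extension; when it succeeds, exactly one auxiliary vertex ($w$, or a small block of them) gets $t$ choices and the rest are forced.

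The remaining work is bookkeeping on the size and cutwidth bounds. For the size: each single-tuple gadget has $O(kq)$ vertices for the comparison chain plus $O(t) = O(\max f)$ vertices to realize the free choice of $t$ colors (if $t > q$ we need roughly $t/(q-1)$ cascaded free-choice vertices, each multiplying the count), so $O(kq + \max f)$ per gadget; multiplying by $q^k$ gadgets gives $O(q^k(kq + \max f)) = O(kq^{k+1}\max f)$, and one checks the explicit constant $20$ is comfortable. For the cutwidth: I would lay out the vertices gadget-by-gadget along a line, with the $k$ boundary vertices $B$ placed first (or threaded through); within a single gadget the comparison chain is essentially a caterpillar of constant "local" cutwidth $O(q)$, and the at most $k$ edges from that gadget back to $B$ contribute at most $k$ to any cut that separates the gadget internally. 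Since at any cut we see the internal edges of at most one gadget (width $O(q)$), its at most $k$ back-edges to $B$, and the $k$ edges of $B$ itself waiting to be "used up" --- wait, one must be careful here: if $B$ sits at the front and every one of the $q^k$ gadgets attaches to it, then a cut near the front sees edges to all later gadgets. The fix is to \emph{not} route all gadgets to a single copy of $B$ directly, but to propagate the values $\alpha(1),\dots,\alpha(k)$ along a "spine" of equality-gadgets (edges forcing consecutive copies to agree), so that each of the $q^k$ single-tuple gadgets reads a \emph{local} copy of the boundary tuple; then any cut crosses only $O(k)$ spine edges plus the $O(q)$ internal edges of one gadget plus $O(k)$ of its attachment edges, for total $O(kq)$, and absorbing the $q^{k+2}$-type slack into the constant $6$ gives cutwidth at most $6kq^{k+2}$. \textbf{The main obstacle} I anticipate is exactly this cutwidth control: ensuring that the "broadcast" of the $k$-tuple $\alpha$ to all $q^k$ sub-gadgets is done through a linear spine whose cuts stay $O(k)$ wide, and simultaneously that the equality-propagation gadgets themselves (which force two vertices to receive the same color, and which by the rank discussion are realizable via a short path of colour-comparison edges) don't blow up either the vertex count or the width --- this is the step where the constants $20$, $6$ and the exponents $k+1$, $k+2$ actually get pinned down, and where a naive construction would fail.
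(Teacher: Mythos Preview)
Your proposal is correct and follows essentially the same strategy as the paper: for each $\alpha_0 \in [q]^k$ build a gadget giving $f(\alpha_0)$ extensions when the boundary is coloured $\alpha_0$ and exactly $1$ extension otherwise, then glue all $q^k$ such gadgets at the shared boundary $B$ so that the extension counts multiply. The paper makes your two sub-gadgets explicit: a ``reduction'' gadget converting each $b_i$ (list $[q]$) into a vertex $b_i'$ with list $\{1,2\}$ that signals whether $c(b_i)=\alpha_0(i)$, and a path of length $\ell-1$ with cyclically rotating $2$-element lists that yields $\ell$ extensions if all $b_i'$ receive colour $1$ and a unique extension otherwise---this is exactly your ``selector chain plus cascaded free-choice vertices''.

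One remark: the ``main obstacle'' you flag---needing a spine so that $B$ does not see edges from all $q^k$ sub-gadgets simultaneously---is not actually an obstacle here. The paper simply places $B$ first and then lists the $q^k$ sub-gadgets one after another; the cut immediately after $B$ then carries $O(kq\cdot q^k)=O(kq^{k+1})$ edges (each sub-gadget attaches to each $b_i$ via $O(q)$ edges from the reduction gadget), which is comfortably within the stated bound $6kq^{k+2}$. So no spine or value-propagation along local copies of $B$ is needed, and the constants $20$ and $6$ fall out with plenty of slack.
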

The proof is given in Appendix \ref{sec:collowomit}.

\subsection{Reduction for counting $q$-colorings modulo $p$}
In this section we prove the following result.
\begin{theorem}
\label{thm:ColRed}
Let $p$ be a prime and let $q \in \mathbb{N}$ such that $p$ does not divide $q-1$. Assuming SETH, there is no $\varepsilon>0$ for which there exists an algorithm that counts the number of list $q$-colorings modulo $p$ for a given $n$-vertex graph, with a given cut decomposition of width $\cw$, in time $(q-\varepsilon)^\cw n^{O(1)}$.
\end{theorem}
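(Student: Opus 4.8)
The goal is a SETH-based lower bound ruling out $(q-\varepsilon)^\cw$ algorithms for counting list $q$-colorings modulo $p$ when $p \nmid q-1$. The natural route is a reduction from $\#_p\textsc{SAT}$ (equivalently $\#_p\textsc{CSP}(q,r)$, via the reductions cited at the start of Section~\ref{sec:collow}): given a CNF formula (or $q$-ary CSP) on $N$ variables, build a graph $G$ of cutwidth roughly $N$ so that the number of list $q$-colorings of $G$ modulo $p$ determines the number of satisfying assignments modulo $p$. If a $(q-\varepsilon)^\cw n^{O(1)}$ algorithm existed, composing it with this reduction would beat the $q^N$-type bound that SETH forbids for $\#_p\textsc{CSP}(q,r)$ (with $r$ chosen depending on $\varepsilon$), yielding the contradiction.

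**Key steps.** First I would set up a ``block'' structure: the $N$ CSP variables are encoded by $N$ groups of vertices, each group carrying one of $q$ colors, laid out so that at any cut only $O(1)$ groups — hence $O(1)$ vertices per group times a constant — cross, giving cutwidth $N + o(N)$ (more precisely, cutwidth $N$ up to lower-order additive terms so that $(q-\varepsilon)^\cw = (q-\varepsilon)^{N(1+o(1))}$ still contradicts SETH). Second, for each constraint of the CSP, I would attach the gadget $G_f$ from Theorem~\ref{thm:ColGad}: choosing $f$ to be the indicator of the satisfying tuples of that constraint (so $\max f = 1$), the gadget contributes exactly one coloring per satisfying local assignment and zero otherwise, and by the size/cutwidth bounds in Theorem~\ref{thm:ColGad} it is of polynomial size and bounded cutwidth (constant, since $k = r$ and $q$ are constants), so attaching all $\poly(N)$ gadgets keeps the total cutwidth at $N + O(1)$ and the graph polynomial-sized. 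Third — and this is where $p \nmid q-1$ is essential — I need the ``selection'' part of the construction to genuinely carry $q$ states per variable through the cut, not $q-1$: by Lemma~\ref{lem:rankComp}, the color compatibility matrix $J$ of a single edge has full rank $q$ over $\F_p$ precisely when $p \nmid q-1$, so the obvious DP cannot be compressed below $q^\cw$, and dually one can design the variable gadget (e.g. a path or matching-like structure of edges crossing the cut) so that all $q^N$ color patterns remain ``live'' and are counted with the correct multiplicity modulo $p$. Finally I would verify the bookkeeping: multiply/normalize to cancel the fixed multiplicative contributions of auxiliary vertices (using that these counts are units mod $p$, or absorbing them into the target value), and check that the number of colorings mod $p$ equals (a known unit multiple of) the number of satisfying assignments mod $p$.

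**Main obstacle.** The delicate point is making the variable-encoding gadget simultaneously (i) have per-cut cost exactly $q$ states per variable and no more, so the cutwidth is $N + o(N)$ rather than something like $2N$ or $N\log q$, and (ii) interact cleanly with the constraint gadgets $G_f$ without the gadgets' own edges blowing up the cut or introducing spurious colorings. Concretely, one typically routes the $N$ variable ``wires'' in parallel and inserts the constraint gadgets one at a time between consecutive layers, arguing that at each cut only the $N$ wire-edges plus the $O(1)$ edges of the single gadget currently being attached are present — this is the standard but technically fiddly cutwidth-accounting step, and it is where the constants hidden in Theorem~\ref{thm:ColGad} (the $6kq^{k+2}$ cutwidth bound) must be checked not to accumulate. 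A secondary subtlety is ensuring the reduction is parsimonious modulo $p$: each satisfying assignment must yield a number of colorings congruent to the same fixed nonzero constant mod $p$, which is why the gadget's guarantee of \emph{exactly} $f(\alpha)$ extensions (an exact count, not just a mod-$p$ count) is used, and why we want $\max f = 1$ to avoid multiplicities vanishing mod $p$.
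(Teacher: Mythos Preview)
Your plan matches the paper's proof almost exactly: reduce from $\#_p\textsc{CSP}(q,r)$ (their Theorem~\ref{thm:csphard}), lay out $n$ parallel ``wires'' carrying the variable values, attach one constraint gadget from Theorem~\ref{thm:ColGad} per clause, and check that the cutwidth is $n + O_{p,q,r}(1)$.

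The one place your sketch is too vague is the wire itself. A ``path or matching-like structure of edges crossing the cut'' does not propagate a colour: a single edge forces \emph{inequality}, not equality, so a bare path would scramble the value at each step rather than preserve it. The paper's fix is precisely the point at which $p\nmid q-1$ enters concretely. For each variable $i$ and clause $j$ they place two vertices $s_{i,j},t_{i,j}$, put an ordinary edge $t_{i,j}s_{i,j+1}$, and then attach a gadget from Theorem~\ref{thm:ColGad} on the boundary $\{s_{i,j},t_{i,j}\}$ with $f(c_1,c_2)=\widetilde{J_1^{-1}}[c_1,c_2]$, where $\widetilde{J_1^{-1}}$ is an integer lift (entries in $\{1,\dots,p\}$) of the inverse of the single-edge compatibility matrix $J_1$ over $\F_p$. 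Summing over the colour $c_2$ of $t_{i,j}$ gives $\sum_{c_2}\widetilde{J_1^{-1}}[c_1,c_2]J_1[c_2,c_3]\equiv_p [c_1=c_3]$, so modulo $p$ the wire transmits equality. This is why invertibility of $J_1$ (equivalently $p\nmid q-1$) is needed, and it also means the transfer gadgets have $\max f = p$, not $1$; likewise the constraint gadget uses $f\in\{1,p\}$ rather than $\{0,1\}$, since Theorem~\ref{thm:ColGad} requires $f\geq 1$. Your instinct that full rank of $J$ is the crux is right; what was missing is that the rank is exploited by literally plugging $J_1^{-1}$ into the extension-counting gadget.
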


Suppose now that $p$ divides $q-1$. Let $q'=q-1$. Then $p$ does not divide $q'-1 = q-2$ and so the result above applies. Noting that any algorithm for \#\textsc{List $q$-Coloring} also works for \#\textsc{List $q'$-Coloring}, we find the following corollary.
\begin{corollary}
\label{cor:ColRed}
Let $p$ be a prime and let $q \in \mathbb{N}$ such that $p$ divides $q-1$. Assuming SETH, there is no $\varepsilon>0$ for which there exists an algorithm that counts the number of list $q$-colorings modulo $p$ for a given $n$-vertex graph, with a given cut decomposition of width $\cw$, in time $(q-1-\varepsilon)^\cw n^{O(1)}$.  
\end{corollary}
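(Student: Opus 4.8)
The plan is to read the corollary off Theorem~\ref{thm:ColRed} by decreasing the palette size by one. Put $q' = q-1$. Since $p$ is prime and divides $q-1$, it cannot also divide the consecutive integer $q-2 = q'-1$; hence $p$ does not divide $q'-1$, and Theorem~\ref{thm:ColRed} applies with $q'$ in the role of $q$. It gives: assuming SETH, there is no $\varepsilon>0$ for which one can count, modulo $p$, the list $q'$-colorings of an $n$-vertex graph equipped with a cut decomposition of width $\cw$ in time $(q'-\varepsilon)^{\cw} n^{O(1)} = (q-1-\varepsilon)^{\cw} n^{O(1)}$.

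Next I would argue that the lower bound for $q'$ colors transfers verbatim to $q$ colors, using the obvious monotonicity of list coloring in the number of colors. An instance of \#\textsc{List $q'$-Coloring} --- a graph $G$, a cut decomposition of it of width $\cw$, and lists $L_v \subseteq [q']$ --- is simultaneously an instance of \#\textsc{List $q$-Coloring}, because $[q'] \subseteq [q]$ forces $L_v \subseteq [q]$ as well. A map $c\colon V(G)\to[q]$ that respects these lists automatically satisfies $c(v)\in L_v\subseteq[q']$, so the list $q$-colorings of this instance are exactly the list $q'$-colorings of the original one; the two counts are equal, a fortiori equal modulo $p$, and the graph together with its supplied cut decomposition is unchanged, so the width parameter is unchanged. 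Therefore an algorithm counting list $q$-colorings modulo $p$ in time $(q-1-\varepsilon)^{\cw} n^{O(1)}$ would, applied to these instances, count list $q'$-colorings modulo $p$ in time $(q'-\varepsilon)^{\cw} n^{O(1)}$, contradicting the statement derived from Theorem~\ref{thm:ColRed}. Hence no such $\varepsilon$ exists, which is exactly the claim of the corollary.

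Because this is a single-step reduction, there is no real obstacle: the entire difficulty of the lower bound sits inside Theorem~\ref{thm:ColRed} and, below it, Theorem~\ref{thm:ColGad}. The only details needing a sentence of care are the elementary remark that a prime divides at most one of two consecutive integers, and the bookkeeping that passing from $q'$-color lists to $q$-color lists leaves both the count modulo $p$ and the cutwidth untouched; one should also note that we invoke Theorem~\ref{thm:ColRed} with $q' = q-1$, so the argument as written covers the range $q \geq 4$.
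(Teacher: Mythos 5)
Your proposal is correct and is essentially identical to the paper's own one-line argument: set $q'=q-1$, note a prime cannot divide both $q-1$ and $q-2$, invoke Theorem~\ref{thm:ColRed} for $q'$, and observe that any \#\textsc{List $q$-Coloring} algorithm solves \#\textsc{List $q'$-Coloring} instances verbatim. Your closing caveat about needing $q\geq 4$ is slightly over-cautious: the underlying CSP hardness only requires $q'\geq 2$, and $p\mid q-1$ with $p$ prime already forces $q\geq 3$, so the edge case $q=3$ is covered as well.
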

Combining the two results above with Theorem \ref{thm:main_upperboundcol} gives Theorem \ref{thm:main}. 

We use the notion of \textit{constraint satisfaction problems} ($\textsc{CSP}$). Informally, a $\textsc{CSP}$ asks if there is an assignment of values from a given domain to a set of variables such that they satisfy a given set of relations.
We denote by $\textsc{CSP}(q,r)$ the $\textsc{CSP}$ with domain $[q]$ and constraints of arity at most $r$. We use \#\textsc{CSP}$(q,r)$ to denote the problem of counting the number of solutions of a given instance of $\textsc{CSP}(q,r)$.
We use the following result from \cite{focke2022counting}.

\begin{theorem}[\cite{focke2022counting}, Theorem 2.5]
\label{thm:csphard}
For each prime $p$, for every integer $q \geq 2$ and $\eps > 0$ there is an integer $r$, such that the following holds.  Unless the SETH fails, $\#_p$\textsc{CSP}$(q, r)$ with $n$ variables and $m$ constraints cannot be solved in time $(q -\eps)^n (n + m)^{O(1)}$.
\end{theorem}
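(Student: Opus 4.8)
The plan is to prove Theorem~\ref{thm:csphard} by composing two reductions and then tracking the exponent carefully. Fix a prime $p$, an integer $q\ge 2$ and $\eps>0$, and set
\[
\lambda \;:=\; \frac{\log_2(q-\eps)}{\log_2 q}\;<\;1,
\]
writing $\lambda=1-\gamma$ with $\gamma=\gamma(q,\eps)>0$, and put $\delta:=1-\gamma/2<1$. Assuming SETH, fix $k_0$ so that $k_0$-\textsc{SAT} on $N$ variables has no randomized algorithm running in time $2^{\delta N}\poly(N)$; a $k_0$-CNF has only polynomially many clauses in $N$, so clause counts will never affect the estimates. Suppose for contradiction that for every $r$ the problem $\#_p\textsc{CSP}(q,r)$ admits a $(q-\eps)^{n}(n+m)^{O(1)}$-time algorithm $A$; the goal is to turn $A$ into a $2^{\delta N}\poly(N)$-time randomized decision procedure for $k_0$-\textsc{SAT}.

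The first reduction isolates solutions and makes modular counting meaningful. By the isolation lemma for $k$-CNFs of~\cite{calabro2008complexity}, a $k_0$-CNF $\phi$ on $N$ variables can be transformed, in polynomial time and with success probability at least $1/\poly(N)$, into polynomially many $k'$-CNF formulas $\phi_1,\dots,\phi_t$ on the \emph{same} variable set, where $k'=k'(k_0,p)$, such that if $\phi$ is satisfiable then with high probability some $\phi_i$ has exactly one satisfying assignment, while every $\phi_j$ has a (possibly empty) subset of the satisfying assignments of $\phi$. Hence the tuple $\big(\#(\phi_i)\bmod p\big)_{i\le t}$ decides satisfiability of $\phi$ with high probability (answer ``satisfiable'' iff some entry is nonzero). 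Crucially, this isolation introduces \emph{no new variables} and increases the clause width only by a constant; in particular $\#_p$-$k'$-\textsc{SAT} inherits the lower bound and has no randomized $2^{\delta N}\poly(N)$ algorithm.

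The second reduction packs Boolean variables into $[q]$-valued variables while preserving the number of solutions; this is the reduction underlying~\cite{focke2022counting}. Partition the $N$ Boolean variables of a given $\phi_i$ into blocks of $b:=\lfloor \ell\log_2 q\rfloor$ variables, where $\ell$ is a constant to be fixed, and represent each block by a group of $\ell$ variables over $[q]$: such a group has $q^{\ell}\ge 2^{b}$ joint states, so a fixed injection encodes each block's Boolean assignment as a group state. Add one arity-$\ell$ constraint per group forcing its state into the image of this injection, and replace each clause of $\phi_i$ --- which involves at most $k'$ Boolean variables, hence at most $k'\ell$ of the $[q]$-variables --- by one constraint of arity at most $k'\ell$ that is satisfied exactly by the group states encoding the assignments satisfying that clause. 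With $r:=k'\ell$ this yields a $\textsc{CSP}(q,r)$ instance with $n=\lceil N/b\rceil\cdot\ell$ variables and $\poly(N)$ constraints whose solutions biject with those of $\phi_i$, so running $A$ on it returns $\#(\phi_i)\bmod p$. Running $A$ on all $t$ instances decides satisfiability of $\phi$ with high probability in total time $t\cdot(q-\eps)^{n}(n+m)^{O(1)}=2^{\,n\log_2(q-\eps)}\poly(N)$. Since $n\le(1+\theta(\ell))\,N/\log_2 q+\ell$ with $\theta(\ell):=\ell\log_2 q/\lfloor \ell\log_2 q\rfloor-1\to 0$ as $\ell\to\infty$, the exponent is at most $(1+\theta(\ell))(1-\gamma)N+O(\ell)$; fixing $\ell=\ell(q,\eps)$ large enough that $(1+\theta(\ell))(1-\gamma)\le\delta$ makes the running time $2^{\delta N}\poly(N)$, contradicting the SETH-based lower bound for $k_0$-\textsc{SAT}. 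This choice fixes $r=k'(k_0,p)\cdot\ell(q,\eps)$, a constant depending only on $p,q,\eps$ --- the $r$ promised by the theorem.

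I expect the genuine obstacle to be the first reduction. Passing from a satisfiability question to a modular-counting question requires isolation, and the naive route --- intersecting with random parity constraints (Valiant--Vazirani style) --- does not fit here: parity constraints have unbounded arity, and simulating them with auxiliary $[q]$-variables would add $\Omega(N/\log q)$ variables and inflate the final exponent past the SETH threshold. The reason to appeal to the isolation lemma for $k$-CNFs of~\cite{calabro2008complexity} is precisely that it isolates by adding \emph{bounded-width clauses and no variables}; the delicate part is verifying its parameters ($k'$, $t$, the success probability) and checking that a randomized reduction is acceptable here (i.e., using the randomized form of SETH, as is standard). The remaining ingredients --- the packing reduction and the exponent arithmetic --- are routine; the only care needed there is to take the block size $\ell$ large enough that $\ell/\lfloor \ell\log_2 q\rfloor$ is close enough to $1/\log_2 q$ to push the running-time exponent below $\delta$, while keeping $\ell$ (hence $r$) a constant.
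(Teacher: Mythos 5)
Your proposal follows essentially the same route as the paper, which does not reprove this statement but obtains it by composing the \textsc{SAT}-to-$\#_p$\textsc{SAT} reduction of Calabro et al.\ with the solution-preserving \textsc{SAT}-to-\textsc{CSP}$(q,r)$ reduction underlying Theorem~2.5 of \cite{focke2022counting}; your variable-packing construction and exponent bookkeeping match that composition. The one quantitative slip is in your recollection of the isolation lemma for $k$-CNFs: to keep the clause width bounded it achieves success probability only $2^{-\eps' N}$ (for any desired $\eps'>0$, with $k'$ depending on $\eps'$), so the reduction produces $2^{\eps' N}\poly(N)$ instances rather than polynomially many --- but this does not affect your conclusion, since the extra $2^{\eps' N}$ factor is absorbed by choosing $\eps'$ small relative to your $\gamma$.
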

This theorem follows from the proof of \cite[Theorem 2.5]{focke2022counting}, since their reduction preserves the number of solutions.

\begin{proof}[Proof of Theorem \ref{thm:ColRed}] 
Let $q \in \mathbb{N}$ and let $p$ be a prime that does not divide $q-1$. Fix $\epsilon>0$ and let $r$ be given from Theorem \ref{thm:csphard}. We will reduce a given instance of $\#_p$\textsc{CSP}$(q, r)$ with constraints $C_1,\dots,C_m$ and variables $x_1,\dots,x_n$ to an instance $(G,L)$ of $\#_p$\textsc{List $q$-Coloring} on $O_{p,r,q}(nm)$ vertices of cutwidth $n+O_{p,r,q}(1)$.

The graph $G$ contains $2m$ columns with $n$ vertices: for each constraint $C_j$, and for each variable $x_i$, we create two vertices $s_{i,j}$ and $t_{i,j}$ (where $j\in [m]$ and $i\in [n]$), which all get $\{1,\dots,q\}$ as list. For all $j\in [m-1]$, we place an edge between $t_{i,j}$ and $s_{i,j+1}$.

The color assigned to $s_{i,1}$ will be interpreted as the value assigned to variable $x_i$. Fix $j\in [m]$. We create gadgets on some vertex set $V_j$ using Theorem \ref{thm:ColGad}, that are `glued' on subsets of vertices from $C_j=\{s_{i,j},t_{i,j}:i\in [n]\}$. 
\begin{enumerate}
\item For each $i\in [n]$, if $j<m$, we create a gadget on boundary set $\{s_{i,j},t_{i,j}\}$ which ensures that we may restrict to counting list colorings $c$ of $(G,L)$ with $c(s_{i,j})=c(s_{i,j+1})$.
\item There is a gadget on a boundary set of size at most $r$ (the $s_{i,j}$ corresponding to the variables involved in the $j$th constraint), for which the number of extensions of any coloring of the boundary to this gadget is equivalent to $0$ modulo $p$  whenever the $j$th constraint is not satisfied, and equal to one otherwise.
\end{enumerate}
A broad overview of the construction is depicted in Figure \ref{fig:Col_Con}.
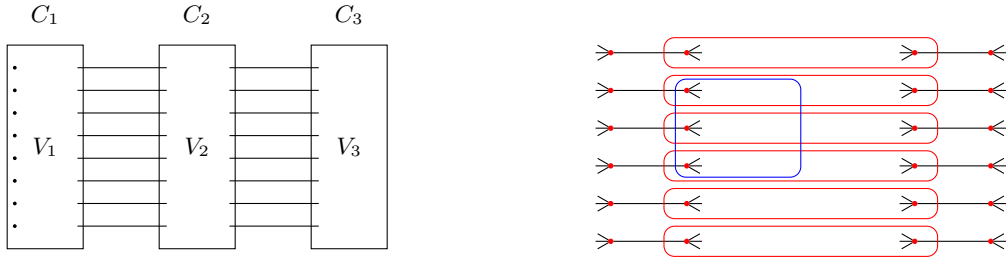
\begin{figure}[htb]
	\centering
	\begin{tabular}{p{0.4\textwidth}p{0.1\textwidth}p{0.4\textwidth}}
		\begin{tikzpicture}
	\tikzstyle{redv} = [draw, circle, fill=red, color=red, inner sep=0.2mm]
	\tikzstyle{smallv} = [draw, circle, fill=black, color=black, inner sep=0.05mm]
	\tikzstyle{blackv} = [draw, circle, fill=black, color=black, inner sep=0.2mm]
	\tikzstyle{tmpv} = [draw, circle, fill=white, color=white, inner sep=0.05mm]
	\def \Rws{8}
	\def \RwHgt{0.3}
	\def \BlkWdth{1}
	\def \Blks{3}
	\pgfmathsetmacro{\Blksm}{\Blks-1}
	\pgfmathsetmacro{\Blksmm}{\Blks-2}
	\foreach \x in {0,...,\Blksm}
	{
		\node (bl\x) at (\x*2, 0){};
		\node (tr\x) at (\x*2+\BlkWdth, \Rws*\RwHgt + \RwHgt){};
		\draw (bl\x) rectangle (tr\x); 
	};

	\foreach \y in {1, ..., \Rws}
	{
		\node[smallv] at (0.1, \y*\RwHgt){};
	};

\foreach \x in {0,...,\Blksmm}
{
	\foreach \y in {1, ..., \Rws}
	{
		\node[tmpv](rvert1\x) at (\x*2 +\BlkWdth - 0.1, \y*\RwHgt) {};
		\draw (rvert1\x)--(\x*2+2 + 0.1, \y*\RwHgt);
	};
};
	\foreach \x in {1,...,\Blks}
	{
		\node at (\x*2 - 2 + 0.5*\BlkWdth, \Rws*\RwHgt + \RwHgt + 0.4){$C_\x$};
		\node at (\x*2 - 2 + 0.5*\BlkWdth, \Rws*\RwHgt*0.5 + 0.5*\RwHgt){$V_\x$};
	}:
	\end{tikzpicture} 
& & \begin{tikzpicture}
	\tikzstyle{redv} = [draw, circle, fill=red, color=red, inner sep=0.2mm]
	\tikzstyle{smallv} = [draw, circle, fill=black, color=black, inner sep=0.05mm]
	\tikzstyle{blackv} = [draw, circle, fill=black, color=black, inner sep=0.2mm]
	\def \RwsGd{6}
		
		\foreach \x in {1,...,\RwsGd}
		{
			\node[redv] (a\x) at (0, \x/2){};
			\node[redv] (b\x) at (1, \x/2){};
			\node[redv] (c\x) at (4, \x/2){};
			\node[redv] (d\x) at (5, \x/2){};
			\draw(a\x)--(b\x);
			\draw(c\x)--(d\x);

		};
	
	\foreach \x in {1,...,6}
	{
		\draw[rounded corners, draw=red] (0.7,\x/2-0.2) rectangle (4.3, \x/2+0.2);
	};
	\draw[rounded corners, draw=blue] (0.85,1.35) rectangle (2.5, 2.65);
	
	\foreach \x in {a,c}
	{
		\foreach \y in {1,...,\RwsGd}
		{
			\draw(\x\y)--++(150:2mm);
			\draw(\x\y)--++(180:2mm);
			\draw(\x\y)--++(210:2mm);
		};
	};
	\foreach \x in {b,d}
	{
		\foreach \y in {1,...,\RwsGd}
		{
			\draw(\x\y)--++(30:2mm);
			\draw(\x\y)--++(0:2mm);
			\draw(\x\y)--++(330:2mm);
		};
	};

\end{tikzpicture}
	\end{tabular}
	
	\caption{A sketch overview of the construction is given on the left-hand side and a more detailed view of two of the columns is given on the right-hand side. The red areas ensure the preservation of information, as described in point 1. The blue area checks whether the clause is satisfied, as described in point 2. }
	\label{fig:Col_Con}
\end{figure}

For the first property, we need the fact that $p$ does not divide $q-1$: this ensures that the color compatibility matrix of a single edge is invertible, which will allow us to `transfer all information about the colors'. The precise construction of the gadgets is deferred to Appendix \ref{sec:collowomit}.

We obtain a cutwidth decomposition of the graph by first running over the vertices in the order 
\[
C_1\cup V_1, ~C_2\cup V_2,\dots,C_m\cup V_m.
\]
Within $C_j\cup V_j$, we first list $s_{1,j},t_{1,j}$ and the vertices in the gadget that has those vertices as boundary set, and then repeat this for $s_{2,j},t_{2,j}$, etcetera. Finally, we run over the vertices in the gadget that verifies the $j$th constraint. At each point, the cutwidth is bounded by $n$ plus a constant (that may depend on $p$, $q$ and $r$).
\end{proof}

\subsection{Corollaries}
We now extend the lower bound of Theorem \ref{thm:ColRed} to counting connected edge sets via the following problem.
\begin{definition}
Given a graph $G$, two $q$-colorings $c$ and $c'$ are equivalent if there is some permutation $\pi: [q] \to [q]$ such that $c = \pi \circ c'$. We will refer to these equivalence classes as \emph{essentially distinct $q$-colorings} and denote the problem of counting the number of essentially distinct $q$-colorings modulo a prime $p$ by $\#_p$\textsc{Essentially distinct $q$-coloring}.
\end{definition}

A simple reduction now gives us the following lower bound for $\#_p$\textsc{Essentially distinct $q$-coloring}. 
\begin{corollary} \label{cor:DisColLow}
Let $p$ be a prime and $q \in \mathbb{N}$ an integer such that $p$ does not divide $q-1$. Assuming SETH, there is no $\epsilon > 0$ for which there exists an algorithm that counts the number of essentially distinct $q$-colorings mod $p$ for a given $n$-vertex graph that is not $(q-1)$-colorable, with a given cut decomposition of cutwidth $\ctw$, in time $(q - \epsilon)^\ctw n^{O(1)}$. 
\end{corollary}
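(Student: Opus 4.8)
The plan is to reduce from the problem handled by Theorem~\ref{thm:ColRed} ($\#_p$\textsc{List $q$-Coloring} on graphs of cutwidth $\cw$), converting the list-coloring instance into an ordinary (listless) graph whose essentially distinct $q$-colorings, counted modulo $p$, encode the list-coloring count, while only increasing the cutwidth by a constant depending on $p,q,r$. First I would recall that a list $q$-coloring instance $(G,L)$ can be simulated by an ordinary graph: attach to the vertex set a fixed gadget $K$ that realises all $q$ ``reference colors''. Concretely, take a clique $\{a_1,\dots,a_q\}$ of size $q$; in any proper $q$-coloring of the enlarged graph the $a_i$ receive all $q$ colors in some order, so after fixing a representative of the essentially-distinct equivalence class we may assume $a_i$ gets color $i$. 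Then a vertex $v$ of $G$ is forced to avoid color $i$ precisely by adding the edge $va_i$ for each $i\notin L(v)$. This turns list constraints into adjacency constraints. The number of ordinary proper $q$-colorings of the resulting graph $G'$ equals $q!$ times the number of list $q$-colorings of $(G,L)$ whose image lies in the ``aligned'' frame; dividing by the symmetry, the number of \emph{essentially distinct} $q$-colorings of $G'$ equals exactly the number of list $q$-colorings of $(G,L)$, and this identity holds verbatim modulo $p$ since we are not dividing by anything $p$-dependent.

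Next I would check the two technical points needed to plug this into SETH-hardness. The cutwidth of $G'$ must be $\cw + O_{p,q,r}(1)$: place the clique $\{a_1,\dots,a_q\}$ together with all of its incident edges into $O(q^2)$ consecutive positions at the very start of the cutwidth ordering used for $(G,L)$; every edge $va_i$ added for a list restriction crosses only a bounded number of additional cuts once we route it through this prefix block, and since $q,r$ are constants the extra contribution to every cut is $O(q^2)$, absorbed into the additive constant. I should also ensure $G'$ is \emph{not} $(q-1)$-colorable, as Corollary~\ref{cor:DisColLow} restricts to such graphs: this is immediate because $G'$ contains $K_q$, so its chromatic number is at least $q$. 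Then a hypothetical $(q-\eps)^\ctw n^{O(1)}$ algorithm for $\#_p$\textsc{Essentially distinct $q$-coloring} on non-$(q-1)$-colorable graphs would, via this reduction, yield a $(q-\eps)^{\cw} n^{O(1)}$ algorithm for $\#_p$\textsc{List $q$-Coloring} of cutwidth $\cw$, contradicting Theorem~\ref{thm:ColRed} and hence SETH.

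The step I expect to require the most care is the bookkeeping that the equivalence-class count is \emph{exactly} the list-coloring count rather than off by a factor. One must verify that in every proper $q$-coloring of $G'$ the clique vertices really do use all $q$ colors (true, they form $K_q$), that each equivalence class of colorings of $G'$ has a unique representative in which $a_i\mapsto i$ (true, since the $a_i$ are distinguishable and their colors are pairwise distinct, the stabiliser of the coloring-restricted-to-the-clique is trivial), and that under this normalisation the colorings of $G'$ biject with list $q$-colorings of $(G,L)$ — here one uses that $va_i\in E(G')$ iff $i\notin L(v)$, so $v$'s admissible colors in $G'$ are exactly $L(v)$. A minor subtlety: if some $L(v)=[q]$ we add no edges and the argument is unaffected; if some $L(v)=\emptyset$ then both sides are $0$. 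Everything else (preservation of the modulus, the constant cutwidth overhead) is routine once this bijection is pinned down.
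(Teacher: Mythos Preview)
Your bijection between essentially distinct $q$-colorings of $G'$ and list $q$-colorings of $(G,L)$ is correct and is exactly the idea the paper uses. The gap is in the cutwidth bound. With a \emph{single} reference clique $\{a_1,\dots,a_q\}$ placed at the front of the ordering, every list-enforcing edge $va_i$ (one for each $v$ and each $i\notin L(v)$) crosses the cut immediately after the clique. In the instances produced by Theorem~\ref{thm:ColRed} there are $\Theta(nm)$ vertices carrying nontrivial lists (all the gadget vertices from Theorem~\ref{thm:ColGad}), so that cut has size $\Omega(nm)$, not $\cw+O_{p,q,r}(1)$. Your sentence ``every edge $va_i$ \dots crosses only a bounded number of additional cuts'' confuses the relevant quantity: cutwidth bounds the number of edges crossing a fixed cut, not the number of cuts a fixed edge crosses.

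The paper repairs exactly this by \emph{replicating} the clique along the ordering: it introduces a copy $\{u^i_1,\dots,u^i_q\}$ for every vertex $v_i$, chains consecutive copies together by the edges $u^i_c u^{i+1}_{c'}$ for $c\neq c'$ (forcing all copies to carry the same colour pattern), and connects $v_i$ only to its local copy. Interleaving $v_1,u^1_1,\dots,u^1_q,v_2,\dots$ then adds at most $O(q^2)$ edges across any cut. If you replace your single clique by this chained family, the rest of your argument (the normalisation $a_i\mapsto i$, the bijection, and the non-$(q-1)$-colorability via the embedded $K_q$) goes through unchanged.
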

\begin{proof}
Let $(G,L)$ be an instance of list coloring with cut decomposition $v_1,\dots,v_n$. We construct an instance of  $\#_p$\textsc{Essentially distinct $q$-coloring}. The graph $G'$ has vertex set
\[
V(G') = V(G) \cup \{u_c^i:c\in [q],i\in [n]\}.
\]
We add edges such that the vertices $\{u_c^i:c\in [q]\}$ induce a $q$-clique for all $i\in [n]$, and for $i\in [n-1]$ we add the edges $u_c^iu_{c'}^{i+1}$ for all $c\neq c'$. This ensures that, if $u_c^1$ is colored $c$, then $u_c^i$ is colored $c$ for all $i\in [n]$. We now also add edges $u_c^iu_i$ for all $c\not\in L_{v_i}$.
\begin{figure}[htb]
	\centering
	\begin{tikzpicture}
	\tikzstyle{redv} = [draw, circle, fill=red, color=red, inner sep=0.2mm]
	\tikzstyle{smallv} = [draw, circle, fill=black, color=black, inner sep=0.05mm]
	\tikzstyle{blackv} = [draw, circle, fill=black, color=black, inner sep=0.2mm]
		\node (clip1) at (-2, -1){};
		\node (clip2) at (8, 3){};
		\clip (clip1) rectangle (clip2);
		
		\node	(G)		at (-1,0)	{$G$};
		\node	(Gp)	at (-1,1.5)	{$G'\setminus G$};
		
		\foreach \x in {1,2,3}{
		\node[blackv, label = {[shift={(0,-0.8)}]$v_\x$}]		(v\x)	at (3*\x-2,0)	{};
		\node[redv, label = {[shift = {(0.4,-0.6)}]$u^\x_1$}]	(u\x1)	at (3*\x-2,1)	{};
		\node[redv, label = {[shift = {(0,0.2)}]$u^\x_2$}]		(u\x2)	at (3*\x-2.5,2)	{};
		\node[redv, label = {[shift = {(0,0.2)}]$u^\x_3$}]		(u\x3)	at (3*\x-1.5,2)	{};
		\draw[red] (u\x1) -- (u\x2);
		\draw[red] (u\x1) -- (u\x3);
		\draw[red] (u\x2) -- (u\x3);
		};
		
		\foreach[evaluate=\x as \y using int(\x+1)] \x in {1,2}{
		\draw (u\x1) -- (u\y2);
		\draw (u\x1) -- (u\y3);
		\draw (u\x2) -- (u\y1);
		\draw (u\x2) to[out=10, in=170] (u\y3);
		\draw (u\x3) -- (u\y2);
		\draw (u\x3) -- (u\y1);
		};
	
		\draw (v1) -- (v2);
		\draw (v1) --++ (225:0.5);
		\draw (v2) --++ (315:0.5);
		\draw (v2) --++ (225:0.5);
		\draw (v3) --++ (315:0.5);
		\draw (v3) --++ (225:0.5);
		
		\draw (v1) -- (u11);
		\draw (v1) to[out=150,in=225] (u12);
		\draw (v2) -- (u21);
		\draw (v3) -- (u31);
		\draw (v3) to[out=30,in=315] (u33);
		
\end{tikzpicture}
	\caption{Example of the construction on (a part of) a graph $G$, with the cliques indicated in red. In this case $q=3$ and we have $L_{v_1} = \{3\}, L_{v_2} = \{2,3\}$ and $L_{v_3} = \{2\}$.}
\end{figure}
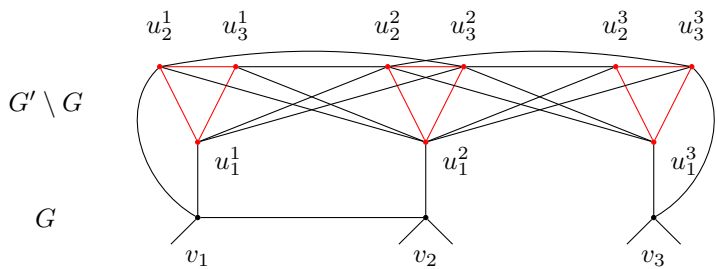
Our new cut decomposition is 
\[
v_1,u_1^1,\dots,u^1_q,v_2,u_1^2,\dots,u^{n-1}_q,v_n,u_1^n,\dots,u_q^n.
\]
Note that $\cw(G')\leq \cw(G)+q^2$, $|V(G')|\leq (q + 1)|V(G)|$ and that $G'$ is not $(q-1)$-colorable.  By Theorem \ref{thm:main}, it suffices to show that the number of essentially distinct $q$-colorings of $G'$ equals the number of list $q$-colorings of $(G,L)$. We will do this by defining a bijective map.

	
Let $\alpha$ be a list coloring of $(G,L)$. Then we can color $G'$ by setting $\alpha'(v) = \alpha(v)$ for $v \in V(G)$ and $\alpha'(u_c^i) = c$ for $c \in [q]$ and $i\in [n]$. This gives us a mapping $\gamma: \alpha \mapsto \overline{\alpha'}$, where $\overline{\alpha'}$ is the equivalence class of $\alpha'$. We find an inverse map by first fixing a representative $\alpha'$ for $\overline{\alpha'}$, such that $\alpha'(u_c^1) = c$ for $c \in [q]$. We can do this since $G'[\{u_1^1, \dots, u_q^1\}]$ is a clique and thus each $u_c^1$ must get a unique color. Also note that since every color is now used, the rest of the coloring is also fixed and thus we find a unique representative this way. We now map $\overline{c'}$ to $c'|_{V(G)}$. Note that these two maps are well defined and compose to the identity map. We conclude that the number of list colorings of $(G,L)$ is equal to the number of essentially distinct colorings of $G'$.
\end{proof}

To achieve the lower bound in Theorem \ref{thm:main_lbcse}, we use an existing argument from \cite{annan1994complexity} to extend this bound to $\#_p$\textsc{Connected Edge Sets}. 
 For this we will need the following definition.
\begin{definition}
	The \emph{$k$-stretch} of a graph $G$ is the graph obtained from $G$ by replacing each edge with a path of length $k$. We denote the $k$-stretch of $G$ by $^kG$.
\end{definition}
Note that $^kG$ has the same cutwidth as $G$.

\begin{theorem}
Assuming SETH, there is no $\epsilon>0$ for which there exists an algorithm that counts the number of spanning connected edge sets mod $p$ of $n$-vertex graphs of cutwidth at most $\ctw$ in time $O((p - \epsilon)^\ctw n^{O(1)})$.    
\end{theorem}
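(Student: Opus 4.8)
The plan is to reduce from $\#_p$\textsc{Essentially distinct $q$-coloring} restricted to graphs that are \emph{not} $(q-1)$-colorable, which for $q=p$ (note $p\nmid p-1$) cannot be solved in time $(p-\epsilon)^{\ctw}n^{O(1)}$ under SETH by Corollary~\ref{cor:DisColLow}. Given such an instance $G$ together with a cut decomposition of width $\ctw$, the reduction outputs the $(p-1)$-stretch $G'={}^{p-1}G$. Since ${}^{p-1}G$ has the same cutwidth as $G$ (and a cut decomposition of the same width can be produced in polynomial time by inserting the internal vertices of each subdivided path right after one of its endpoints), and since ${}^{p-1}G$ has only $|V(G)|+(p-2)|E(G)|=n^{O(1)}$ vertices ($p$ being a fixed constant), it suffices to show that the number of spanning connected edge sets of ${}^{p-1}G$ modulo $p$ determines the number of essentially distinct $p$-colorings of $G$ modulo $p$.

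The first ingredient is the Tutte-type identity (in the spirit of \cite{annan1994complexity}): for every connected graph $H$ with at least one edge and every integer $k\ge 1$,
\[
(\text{number of spanning connected edge sets of } {}^{k}H) \;=\; \sum_{\substack{A\subseteq E(H)\\ (V(H),A)\text{ connected, spanning}}} k^{|E(H)|-|A|}.
\]
I would prove this by a direct combinatorial analysis: for a spanning connected edge set $X$ of ${}^{k}H$ and an edge $e$ of $H$, consider $X$ restricted to the length-$k$ path $P_e$ replacing $e$. Covering the $k-1$ internal (degree-$2$) vertices of $P_e$ while keeping ${}^{k}H$ connected forces $X\cap P_e$ to be either all $k$ edges of $P_e$ or exactly $k-1$ of them (in the latter case there are $k$ choices for the omitted edge, and $e$'s endpoints are not joined through $P_e$). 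Hence $X$ is determined by the set $A\subseteq E(H)$ of fully covered paths, which must form a connected spanning subgraph of $H$, together with $k$ independent choices for each $e\notin A$, which gives the stated sum. A handful of boundary cases (the omitted edge incident to an original vertex, low-degree vertices of $H$) need checking, but they cause no trouble because $(V(H),A)$ connected already covers every original vertex of $H$.

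Next I specialize to $k=p-1$, so $k\equiv_p-1$, and reduce modulo $p$: the right-hand side becomes $\equiv_p(-1)^{|E(G)|}\sum_{A}(-1)^{|A|}$, the sum over connected spanning subgraphs $A$ of $G$, which by the Whitney subgraph expansion of the chromatic polynomial equals $(-1)^{|E(G)|}$ times the coefficient of $\lambda$ in $P(G;\lambda)$. Writing $P(G;\lambda)=\sum_{\ell\ge 1}S(G,\ell)\,\lambda(\lambda-1)\cdots(\lambda-\ell+1)$, where $S(G,\ell)$ is the number of partitions of $V(G)$ into $\ell$ nonempty independent sets, the coefficient of $\lambda$ equals $\sum_{\ell}S(G,\ell)(-1)^{\ell-1}(\ell-1)!$. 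Since $G$ is not $(p-1)$-colorable we have $S(G,\ell)=0$ for all $\ell\le p-1$; modulo $p$ every term with $\ell\ge p+1$ vanishes because $(\ell-1)!\equiv_p 0$; and for $\ell=p$ Wilson's theorem gives $(p-1)!\equiv_p -1$. So the coefficient of $\lambda$ is $\equiv_p(-1)^{p}S(G,p)$. Finally, the number of essentially distinct $p$-colorings of $G$ is the number of partitions of $V(G)$ into at most $p$ nonempty independent sets, which equals $S(G,p)$ precisely because $G$ is not $(p-1)$-colorable. Combining everything, the number of spanning connected edge sets of ${}^{p-1}G$ is $\equiv_p(-1)^{|E(G)|+p}$ times the number of essentially distinct $p$-colorings of $G$, and the sign $(-1)^{|E(G)|+p}\in\{-1,+1\}$ is computable in polynomial time and is its own inverse modulo $p$.

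Putting the pieces together: an algorithm computing the number of spanning connected edge sets modulo $p$ of $n$-vertex graphs of cutwidth at most $\ctw$ in time $(p-\epsilon)^{\ctw}n^{O(1)}$ would, applied to ${}^{p-1}G$, compute the number of essentially distinct $p$-colorings modulo $p$ of non-$(p-1)$-colorable $n$-vertex graphs of cutwidth $\ctw$ in time $(p-\epsilon)^{\ctw}n^{O(1)}$, contradicting Corollary~\ref{cor:DisColLow} under SETH. The step I expect to be most delicate is establishing the stretch identity and, above all, its modular specialization: one must argue that exactly the right coefficient of the chromatic polynomial survives modulo $p$, so that precisely the essentially-distinct $p$-coloring count $S(G,p)$ is isolated — this is where both the non-$(p-1)$-colorability hypothesis and Wilson's theorem are essential. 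The cutwidth-preservation of the stretch and the polynomial size blow-up are routine.
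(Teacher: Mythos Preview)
Your proposal is correct and follows the same high-level reduction as the paper (apply Corollary~\ref{cor:DisColLow} with $q=p$, pass to the $(p-1)$-stretch, use Wilson's theorem and the non-$(p{-}1)$-colorability hypothesis), but the way you execute the arithmetic link is genuinely different. The paper invokes the Jaeger--Vertigan--Welsh stretch formula for the Tutte polynomial as a black box, specializes to $T({}^{p-1}G;1,2)\equiv_p(-1)^{n-r(G)}T(G;1-p,0)$, then uses the relation $P(G;p)=(-1)^{r(G)}p\,T(G;1-p,0)$ and divides out the factor $p$ over $\mathbb{Z}$ before reducing. You instead prove the needed special case $T({}^kG;1,2)=\sum_{A\text{ conn.\ spanning}}k^{|E|-|A|}$ by a direct bijective argument on subdivided paths, and then match it against the linear coefficient of $P(G;\lambda)$ via Whitney's subgraph expansion and the falling-factorial basis, so that the troublesome factor $p$ never appears. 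Your route is more elementary and self-contained (no Tutte machinery, no ``divide by $p$ first, reduce later'' step), while the paper's route is shorter by citation and makes clearer how the argument would extend to other Tutte evaluations. One small point worth stating explicitly in your write-up: you implicitly use that the instance $G$ coming out of Corollary~\ref{cor:DisColLow} is connected (so that the linear coefficient of $P(G;\lambda)$ really captures the connected-spanning sum); this holds for the construction there, but it is cleaner to note it.
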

\begin{proof}
	This proof closely follows a reduction from Annan \cite{annan1994complexity}, using ideas from Jaeger, Vertigan and Welsh \cite{jaeger1990computational}.
	
	Let $G$ be any graph with cutwidth $\ctw$ and let $p$ be a prime. Note that the number of spanning connected edgesets of $G$ is equal to the value of $T(G;1,2)$, the Tutte polynomial of $G$ at the point $(1,2)$. The following formula is found in (\cite{jaeger1990computational}, proof of Theorem 2)
	\[
	T(^kG;a,b) = (1 + a + \dots + a^{k-1})^{n - r(G)} T\left(G;a^k,\frac{b + a + \dots + a^{k-1}}{1 + a + \dots + a^{k-1}}\right).
	\]
	Choosing $a=1, b = 2$ and $k = p - 1$, gives
	\[
	T(^{p-1}G;1,2) = (p - 1)^{n - r(G)}T\left(G;1,\frac{2 + p - 2}{p-1}\right) \equiv_p (-1)^{n-r(G)} T(G;1-p,0).
	\]
	Here we use the fact that for any multivariate polynomial $P(x,y) \equiv_p P(x + tp, y + sp)$ for any $s, t \in \mathbb{Z}$. We find that, since the $k$-stretch of a graph $G$ has the same cutwidth as $G$, an algorithm that counts the number of spanning connected edgesets (mod $p$) on a graph with bounded cutwidth, also gives the Tutte polynomial at $(1-p,0)$ mod $p$. 
	
	Using another well known interpretation of the Tutte polynomial \cite{oxley1979tutte} we can relate the $T(G;1-p,0)$ to the chromatic polynomial $P(G;p)$ as follows:
	\[
	P(G;p) = (-1)^{r(G)}p^{k(G)}T(G;1-p,0).
	\]
	Note that we may assume that the number of connected components $k(G) = 1$ (and thus $r(G) = n - 1$), since the number of spanning connected edgesets is trivially $0$ if $G$ has more than one component. We want to get rid of the remaining factor of $p$ (since we will work mod $p$). To do this we will count the number of essentially distinct colorings (different up to color-permutations) using exactly $p$ colors instead. This will turn out to be at least as hard as counting all colorings.
	
	Let $G$ be a graph that is not $(p-1)$-colorable. With this assumption, the number of $p$ colorings of $G$ is $p!$ times the number $C_p(G)$ of essentially distinct $p$-colorings of $G$, since any coloring uses all colors and thus can be mapped to $p!$ equivalent colorings by permuting the colors. So
	\[
	(-1)^{n-1}pT(G,1-p,0) =P(G;p) = p(p-1)!C_p(G),
	\]
	which holds over the real numbers hence we may divide both sides by $p$. By Wilson's Theorem $(p-1)! \equiv_p -1$, so we find 
	\[
	(-1)^{n-1}T(G,1-p,0) \equiv_p -C_p(G).
	\]
	Hence we can use the number of spanning connected edgesets (mod $p$) of the $(p-1)$-stretch of $G$ to find the  Tutte polynomial at $(1-p,0)$ mod $p$ and then also the number of essentially distinct colorings. The claimed result now follows from Corollary \ref{cor:DisColLow} (with $q=p$). 
\end{proof}
The lemma above gives the lower bound of Theorem \ref{thm:main_lbcse} The upper bound is proved in the appendix.

\section{Conclusion}
In this paper we give tight lower and upper bounds for counting the number of (list) $q$-colorings and connected spanning edge sets of graphs with a given cutwidth decomposition of small cutwidth. Our results specifically relate to list $q$-coloring and essentially distinct $q$-coloring, but they can easily be extended to normal $q$-coloring for certain cases. In particular, if $q < p$, we may apply Corollary \ref{cor:DisColLow}, since in the setting of the corollary, the values differ by $q!$ which is nonzero modulo $p$. If
the chromatic number $\chi(G) \geq p$, then the number of $q$-colorings is trivially $0 \mod p$, since the number of $q$-colorings is a multiple of $\chi(G)$. This leaves us with the rather specific case of $\chi(G) < p \leq q$, for which the exact complexity remains unresolved.

Our results on the modular counting of colorings show that the modulus can influence the complexity in interesting ways, and that in some cases this effect can be directly explained by the rank of the compatibility matrix.
 
Our results leave several directions for further research:
\begin{itemize}
	\item What is the fine-grained complexity of evaluating other points of the Tutte polynomial (modulo $p$)?
	\item What is the complexity of counting homomorphisms to graphs different from complete graphs, e.g. cycles or paths. Is it still determined by the rank of an associated compatibility matrix?
\end{itemize}

Another question in the direction of fast \emph{decision problems} is how small representative sets we can get for the compatibility matrix of graphs other than complete bipartite graphs (which is equivalent to the setting of \cite[Theorem 1.2]{FominLPS16}) or matchings which has been studied for the decision version in~\cite{jansen2019computing}.

\bibliography{bib}

\appendix
\section{Proofs omitted from Section \ref{sec:collow}}
\label{sec:collowomit}
\subsection{Proof of Theorem \ref{thm:ColGad}}
We first reduce the lists of each $b_i$ to $\{1,2\}$ using the following gadget.
\begin{lemma} \label{lem:ColReduc}
Let $q,k \in \mathbb{N}$ and let $a\in [q]$. There is a graph $G$ with $b,b'\in V(G)$ and color lists $L_v\subseteq [q]$ for $v\in V(G)$, such that $L_b=[q]$ and the following two properties hold:
	\begin{itemize}
		\item for all $c_b\in [q]$, there is a unique list coloring $c$ of $G$ with $c(b)=c_b$,
		\item for all list colorings $c$ of $G$, if $c(b) = a$, then $c(b') = 1$ and if $c(b)\neq a$ then $c(b')=2$.
	\end{itemize}
\end{lemma}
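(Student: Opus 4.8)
The only genuine constraint in the statement is that $L_b$ must be all of $[q]$; every other vertex — including $b'$ — may be given any list we please, and $b'$ will end up with list $\{1,2\}$. The plan is to build a small \emph{detection gadget}: for a fixed target color $d$, a gadget hanging off $b$ that contains a distinguished vertex with a two-element list which is forced to one specified value exactly when $c(b)=d$, and whose \emph{every} vertex is colored in a way uniquely determined by $c(b)$. Given such gadgets, one detects ``$c(b)=a$'', detects ``$c(b)\neq a$'', and lets $b'$ (with list $\{1,2\}$) be adjacent to the two outputs, so that $c(b')=1$ is forced precisely when $c(b)=a$ and $c(b')=2$ otherwise.

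The core device is a \emph{clique of witnesses}. Fix a ``common'' color $0\in[q]$; for every color $c$ in some set $S\subseteq[q]\setminus\{0\}$, introduce a vertex $z_c$ with $L_{z_c}=\{0,c\}$, make the $z_c$ pairwise adjacent, and join each $z_c$ to $b$. Since the $z_c$ form a clique of two-list vertices, at most one of them can take the color $0$ in a proper coloring. Adding one further ``output'' vertex $w$ with $0\in L_w$, adjacent to all the $z_c$ and to $b$, then pins everything down: if $c(b)=c$ for some $c\in S$ then $z_c$ is forced onto $0$, forcing $w$ off $0$; otherwise $w$ decides the clique — if $c(w)=0$ then every $z_c$ is forced onto its private color $c$. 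Taking $S=[q]\setminus\{0,d\}$ and $L_w=\{0,d\}$ and using the edge $wb$, one checks that in every list coloring $c(w)=0$ holds if and only if $c(b)=d$, with the whole gadget uniquely colored given $c(b)$. Running this with common color $2$ gives a vertex $s$ with ``$c(s)=2\iff c(b)=a$''; running it with common color $1$ gives a vertex $p$ with ``$c(p)=1\iff c(b)=a$'', from which a single extra vertex $t$ with $L_t=\{1,a\}$ adjacent to $p$ produces ``$c(t)=1\iff c(b)\neq a$''. Making $b'$ adjacent to $s$ and to $t$ then forces $c(b')$ as required. The construction above uses lists $\{1,a\}$ and $\{2,a\}$, so it needs $a\notin\{1,2\}$; the cases $a\in\{1,2\}$ (and $q=2$) are handled by the same idea with a single layer of witnesses and $b'$ attached directly to $b$. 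The resulting graph has $O(q)$ vertices.

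The step that forces all the design choices, and the one I expect to be the main obstacle, is \emph{uniqueness}: for each fixed $c_b\in[q]$ there must be exactly one list coloring, so no ``spurious'' second coloring may survive. This is precisely why the witnesses are placed in a clique rather than left independent — a lone witness, or a set of mutually non-adjacent witnesses, attached only to $b$ becomes underdetermined the moment $c(b)$ is a color outside its two-element list, whereas the clique together with the output vertex $w$ pins each witness in all $q$ cases. The verification that the coloring is unique and that $c(b')$ takes the claimed value is then a routine case analysis which I would organize according to whether $c_b=a$, $c_b$ is a ``private'' color of some witness, or $c_b$ is one of the few remaining boundary colors; in each case one walks through the gadget and observes that every vertex has exactly one available color.
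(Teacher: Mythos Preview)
Your construction is correct and close in spirit to the paper's, but the two differ in how they enforce uniqueness and how they handle an arbitrary target $a$. For uniqueness, you make the witnesses $z_c$ (lists $\{0,c\}$) into a clique so that at most one can take the common color; the paper instead keeps its witnesses $s_2,\dots,s_q$ (lists $\{1,i\}$) mutually non-adjacent but adds a second layer $t_2,\dots,t_q$ (lists $\{1,j\}$) joined completely bipartitely to the $s_i$, which achieves the same propagation (one $s_k=1$ forces every $t_j=j$, which in turn forces every $s_i=1$). For arbitrary $a$, you instantiate two detection gadgets (with common colors $1$ and $2$) and combine their outputs at $b'$, whereas the paper first reduces to $a=1$ via a short four-vertex ``relabeling'' path and then needs only one detection gadget with $b'$ wired directly to all $s_i$. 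Both routes yield an $O(q)$-vertex gadget; the paper's is a little more economical, yours avoids the separate relabeling step. One small imprecision: for $a=1$ you cannot literally attach $b'$ to $b$ and still obtain $c(b')=1$ when $c(b)=1$; you need one extra negation vertex between the gadget output and $b'$ (as you already do for $t$ in the generic case). With that trivial fix the argument is complete.
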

\begin{proof}
We first note that it is easy to `relabel colors', as shown in the construction\footnote{If $a=1$ or $a'=2$ we slightly change the construction by removing the top left or top right vertex respectively.} in Figure \ref{fig:lb4}.
	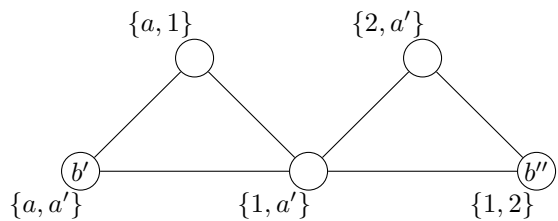
\begin{figure}[h]
		\centering
		\begin{tikzpicture}[-, scale = 1.5]
		\tikzstyle{every state} = [inner sep = 0.1mm, minimum size = 5mm]
		
		\node[state]	(BP)	at (0,0)				{$b'$};
		\node[]			(LB)	at ($(BP)+(-0.3,-0.3)$)	{$\{a,a'\}$};
		\node[state]	(V1)	at (1,1)				{};
		\node[]			(L1)	at ($(V1)+(-0.3,0.3)$)	{$\{a,1\}$};
		\node[state]	(V2)	at (2,0)				{};
		\node[]			(L2)	at ($(V2)+(-0.3,-0.3)$)	{$\{1,a'\}$};
		\node[state]	(V3)	at (3,1)				{};
		\node[]			(L3)	at ($(V3)+(-0.3,0.3)$)	{$\{2,a'\}$};
		\node[state]	(V4)	at (4,0)				{$b''$};
		\node[]			(L4)	at ($(V4)+(-0.3,-0.3)$)	{$\{1,2\}$};
		
		\path	(BP)	edge	node	{}	(V1)
				(V2)	edge	node	{}	(BP)
						edge	node	{}	(V1)
						edge	node	{}	(V3)
						edge	node	{}	(V4)
				(V3)	edge	node	{}	(V4);
				
		\end{tikzpicture}
		\caption{A gadget to `relabel colors'. It has two special vertices $b'$ and $b''$, and lists are depicted with sets. For any list coloring $c$ of the depicted gadget, if $c(b') = a$, then $c(b'') = 1$ and if $c(b') = a'$, then $c(b'') = 2$. In both cases, there is a unique way to color the remaining vertices.}
		\label{fig:lb4}
	\end{figure}
We can therefore first make a gadget for which $b'$ has color list $\{a,a'\}$ for some $a'\neq a$, and then relabel $a,a'$ to $1,2$. By symmetry, we can therefore assume that $a=1$ (or simply replace $1$ with $a$ and $2$ with $a'$ in the argument below).
Let
	\[
	V = \{b, b'\} \cup \{s_i : i = 2, \dots, q\} \cup \{t_i : i = 2, \dots, q\}
	\]
	and
	\[
	E = \{s_ib : i = 2, \dots, q\} \cup \{s_ib' : i = 2, \dots, q\} \cup \{s_it_j : i,j = 2, \dots, q\}.
	\]
	Now let $L_b=[q]$, $L_{b'} = \{1,2\}$ and $L_{t_i} = L_{s_i} = \{1,i\}$ for $i\in \{2,\dots,q\}$. A depiction is given in Figure \ref{fig:lb1}.
	\begin{figure}[h]
		\centering
		\begin{tikzpicture}[-, scale = 1.5]
		\tikzstyle{every state} = [inner sep = 0.1mm, minimum size = 5mm]
		
		\node[state]	(B)		at (0,0)		{$b$};
		\node[]			(LB)	at ($(B)+(-0.3,-0.3)$)	{$\{1, \dots, q\}$};
		\node[state]	(S2)	at (1,1)		{$s_2$};
		\node[]			(LB)	at ($(S2)+(-0.5,0)$)	{$\{1,2\}$};
		\node[state]	(SQ)	at (3,1)		{$s_q$};
		\node[]			(LB)	at ($(SQ)+(0.5,0)$)		{$\{1,q\}$};
		\node[state]	(T2)	at (1,2)		{$t_2$};
		\node[]			(LB)	at ($(T2)+(-0.5,0)$)	{$\{1,2\}$};
		\node[state]	(TQ)	at (3,2)		{$t_q$};
		\node[]			(LB)	at ($(TQ)+(0.5,0)$)		{$\{1,q\}$};
		\node[state]	(BP)	at (4,0)		{$b'$};
		\node[]			(LB)	at ($(BP)+(-0.3,-0.3)$)	{$\{1,2\}$};
		
		\node			(DS)	at ($(S2)!0.5!(SQ)$)	{$\dots$};
		\node			(DT)	at ($(T2)!0.5!(TQ)$)	{$\dots$};
		
		\path	(B)		edge	node	{}	(S2)
						edge	node	{}	(SQ)
						edge	node	{}	($(DS)!0.4!(B)$)
				(BP)	edge	node	{}	(S2)
						edge	node	{}	(SQ)
						edge	node	{}	($(DS)!0.4!(BP)$)
				(T2)	edge	node	{}	(S2)
						edge	node	{}	(SQ)
						edge	node	{}	($(DS)!0.5!(T2)$)
				(TQ)	edge	node	{}	(S2)
						edge	node	{}	(SQ)
						edge	node	{}	($(DS)!0.5!(TQ)$);
		
		\end{tikzpicture}
		\caption{The construction of the list coloring instance of the proof of Lemma \ref{lem:ColReduc}.}
		\label{fig:lb1}
	\end{figure}
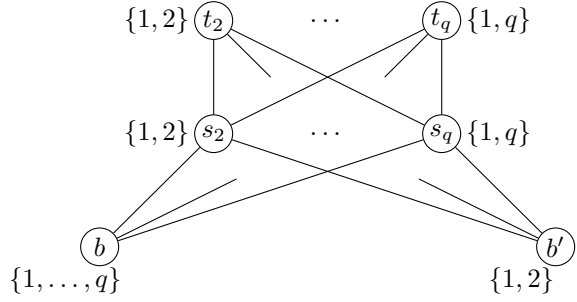
	
    If a list coloring $c$ of $G$ satisfies $c(b) = 1$, then $c(s_i) = i$ and thus $c(t_i) = 1$ for each $i\in \{2,\dots,q\}$. In particular $c(s_2) = 2$ and $c(b') = 1$. 
	
	If $c_b\in \{2,\dots,q\}$, then any list coloring $c$ with $c(b)=c_b$ satisfies $c(s_i) = 1$ and $c(t_i) = i$ for all $i\in \{2,\dots,q\}$, and so $c(b') = 2$.
	
	This proves that, starting with the color $c_b\in [q]$ for $b$, there is always a unique extension to a list coloring of $G$, and this satisfies the property that vertex $b'$ receives color $1$ if $c_b=1$, and receives color $2$ otherwise.
\end{proof}

We also make use of the following construction.

\begin{lemma} \label{lem:ColAtom}
Let $k,\ell \in \mathbb{N}$. There is a graph $G$, a subset of vertices $B = \{b_1, \dots, b_k\}\subseteq V$ of size $k$, and color lists $L_v$ for all $v\in V(G)$ such that:
\begin{itemize}
    \item $L_{b_i} = \{1,2\}$ for all $i\in \{1,\dots,k\}$,
    \item there are exactly $\ell$ list colorings $c$ of $G$ with $c(B)=\{1\}$,
    \item for each partial coloring $c_B$ of $B$ with $c_B(B)\neq \{1\}$, there is a unique extension of $c_B$ to a list coloring of $G$.
\end{itemize}
\end{lemma}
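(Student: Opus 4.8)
The plan is to build $G$ from two ingredients: a tiny ``AND--detector'' that collapses the $k$ boundary vertices into a single binary signal, and a \emph{switched counter} $D_m$ which produces exactly $m$ list colourings when switched on and exactly one when switched off. Throughout I assume $q\ge 3$ (the constructions below use the colour $3$; for $q\le 2$ only powers of two are realisable as $\ell$ and the statement is degenerate). I will also reuse the ``relabelling'' gadget from Figure~\ref{fig:lb4} in the proof of Lemma~\ref{lem:ColReduc}: for any prescribed orientation it links a vertex with a fixed $2$-element list to a vertex with list $\{1,2\}$ by a colour bijection, and given the colour of either endpoint it has a unique extension to its internal vertices.

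\textbf{Switched counter.} First I would prove, by induction on $m\in\{0,1,2,\dots\}$, that there is a graph $D_m$ with a distinguished vertex $\mu$, $L_\mu=\{1,2\}$, having exactly $m$ list colourings with $\mu=1$ and exactly one with $\mu=2$. For $m=1$ take $D_1$ to be the single vertex $\mu$; for $m=0$ hang a pendant with list $\{1\}$ on $\mu$. For $m\ge 2$ I would add a branch vertex $\sigma$ with $L_\sigma=\{2,3\}$ adjacent to $\mu$, so that $\mu=1$ leaves the two choices $\sigma\in\{2,3\}$ while $\mu=2$ forces $\sigma=3$; then attach a copy of $D_{m-1}$ whose switch $\mu'$ is linked to $\sigma$ by a relabelling gadget oriented so that $\sigma=2$ activates $D_{m-1}$ (i.e. $\mu'=1$) and $\sigma=3$ deactivates it ($\mu'=2$). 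With $\mu=1$ the branch $\sigma=2$ contributes $m-1$ colourings and the branch $\sigma=3$ contributes $1$, for a total of $m$; with $\mu=2$ only the rigid branch survives and contributes $1$. Since each $D_m$ uses one copy of $D_{m-1}$ plus $O(1)$ further vertices, the recursion terminates and $D_m$ has $O(m)$ vertices (a doubling variant reduces this to $O(\log m)$, far below the bound needed later in the proof of Theorem~\ref{thm:ColGad}).

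\textbf{Assembling $G$.} With $B=\{b_1,\dots,b_k\}$ and $L_{b_i}=\{1,2\}$, I would add a vertex $u$ with $L_u=\{2,3\}$ adjacent to every $b_i$, and attach a copy of $D_{\ell-1}$ whose switch is linked to $u$ by a relabelling gadget oriented so that $u=2$ means ``$D_{\ell-1}$ on'' and $u=3$ means ``$D_{\ell-1}$ off''. If $c(B)=\{1\}$ then $u$ is unconstrained by the $b_i$, so $u\in\{2,3\}$: the case $u=2$ extends in $\ell-1$ ways through $D_{\ell-1}$ and the case $u=3$ extends uniquely, totalling $\ell$. If some $b_j$ receives colour $2$, then the edge $ub_j$ forces $u=3$, so $D_{\ell-1}$ is off and the whole of $G$ outside $B$ extends uniquely. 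This gives the three required properties, with $L_{b_i}=\{1,2\}$ as demanded.

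\textbf{Main obstacle.} The genuinely delicate point is realising an arbitrary value $\ell$ \emph{exactly}, rather than just a product of list sizes; the ``$+1$'' increment is what forces the branch-plus-relabelling construction, since a single extra colouring has to be grafted onto an otherwise rigid gadget. The remaining work is routine verification: each relabelling gadget and each copy of $D_{m-1}$ meets the rest of the construction in exactly one vertex, so colouring counts multiply across that vertex as claimed, and one only has to keep careful track of which colour of the shared vertex plays the ``active'' role.
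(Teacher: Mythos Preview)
Your proposal is correct and follows the same high-level architecture as the paper: a vertex with list $\{2,3\}$ adjacent to all $b_i$ acts as an AND-detector (forced to $3$ as soon as some $b_j=2$, free otherwise), and a chain hanging off this vertex realises the count $\ell$ by offering a binary choice at each step that either propagates rigidly or branches once more.

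The only real difference is in how the counter chain is implemented. You build $D_m$ recursively, inserting a full relabelling gadget at every step to reset the ``switch'' vertex back to list $\{1,2\}$. The paper avoids this overhead by observing that the relabelling can be done inline: it simply takes a path $w_1,\dots,w_{\ell-1}$ whose lists cycle through $\{2,3\},\{1,3\},\{1,2\}$, attaches $w_1$ to all $b_i$, and checks directly that each $w_i$ has two choices exactly when the previous vertex took its ``active'' colour. This yields the same recurrence $m\mapsto m+1$ along the path with no auxiliary gadgets. Your modular construction is perfectly valid and arguably easier to verify in isolation, while the paper's version is shorter and keeps the vertex count to exactly $k+\ell-1$.
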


\begin{proof}
	We start with $V = B$ and add a path\footnote{When $\ell = 1$, we add no vertices of the form $w_i$ and the statements of the lemma immediately follow.} $w_1, \dots, w_{\ell-1}$ with color lists
	\[
	L_{w_i} =\begin{cases}
	\{2, 3\}    & \text{if } i \equiv_3 1, \\
	\{1, 3\}    & \text{if } i \equiv_3 2, \\
	\{1, 2\}    & \text{if } i \equiv_3 0, \\
	\end{cases}
	\]
	and add edges $b_iw_1$ for $i = 1, \dots, k$.  A depiction is given in Figure \ref{fig:lb2}.
	\begin{figure}[h]
		\centering
		\begin{tikzpicture}[-, scale = 2]
		\tikzstyle{every state} = [inner sep = 0.1mm, minimum size = 8mm]
		\node[state]	(B1)	at (0,0)	{$b_1$};
		\node[]			(LB1)	at ($(B1)+(-0.3,0.3)$)	{$\{1,2\}$};
		\node[state]	(BK)	at (2,0)	{$b_k$};
		\node[]			(LBK)	at ($(BK)+(-0.3,0.3)$)	{$\{1,2\}$};
		\node[state]	(W1)	at (1,-1)	{$w_1$};
		\node[]			(LB1)	at ($(W1)+(-0.3,-0.3)$)	{$\{2,3\}$};
		\node[state]	(W2)	at (2,-1)	{$w_2$};
		\node[]			(LB1)	at ($(W2)+(-0.3,-0.3)$)	{$\{1,3\}$};
		\node[state]	(W3)	at (3,-1)	{$w_3$};
		\node[]			(LB1)	at ($(W3)+(-0.3,-0.3)$)	{$\{1,2\}$};
		\node[state]	(WL)	at (5,-1)	{$w_{\ell-1}$};
		\node[]			(LB1)	at ($(WL)+(-0.3,-0.3)$)	{$\{2,3\}$};
		
		\node			(DB)	at ($(B1)!0.5!(BK)$)	{$\dots$};
		\node			(DW)	at ($(W3)!0.5!(WL)$)	{$\dots$};
		
		\path	(W1)	edge	node	{}	(B1)
						edge	node	{}	($(W1)!0.6!(DB)$)
						edge	node	{}	(BK)
						edge	node	{}	(W2)
				(W3)	edge	node	{}	(W2)
						edge	node	{}	($(W3)!0.6!(DW)$)
				(WL)	edge	node	{}	($(WL)!0.6!(DW)$);
		\end{tikzpicture}
		\caption{Construction in the proof of Lemma \ref{lem:ColAtom} when $\ell \equiv_3 2$.}
		\label{fig:lb2}
	\end{figure}
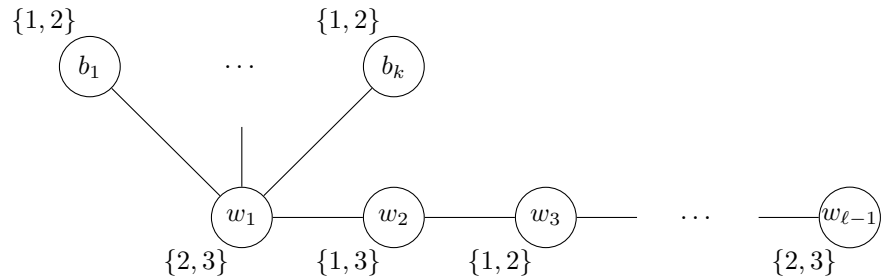
	
	If a list coloring $c$ satisfies $c(b_i) = 2$ for some $i\in [k]$, then $c(w_1) = 3$, $c(w_2) = 1$, $c(w_3) = 2$ etcetera. Hence there is a unique extension of any partial coloring of $B$ that assigns color 2 somewhere.
	
	If $c(b_i) = 1$ for all $i\in [k]$, then we have a choice for the color of $w_1$. If $c(w_1) = 3$ then we get the same propagation as before, however if $c(w_1) = 2$, then we have a choice for the color of $w_2$. Using a simple induction argument we find that the number of possible list colorings with $c(B)=1$ equals $\ell$.
\end{proof}

We are now ready to construct the main gadget.
\begin{proof}[Proof of Theorem \ref{thm:ColGad}]
Let $f\in [q]^k$.
We create vertices $b_1,\dots,b_k$ and give them all $[q]$ as list.

For each partial coloring $\alpha \in [q]^k$, we create a graph $G_{\alpha}$ that contains $b_1,\dots,b_k$ in their vertex set, but the graphs are on disjoint vertex sets otherwise (we `glue' the graphs on the special vertices $b_1,\dots,b_k$). It suffices to show that we can find lists for the `private' vertices of $G_{\alpha}$ such that the number of extensions of a coloring $c_B$ of $B$ is 1 if $c_B(b_i)\neq \alpha(i)$ for some $i\in [k]$, and $f(\alpha)$ otherwise. The resulting gadget will then have $1\cdot 1 \cdot \ldots \cdot 1 \cdot f(\alpha)=f(\alpha)$ possible extensions for the precoloring $\alpha$, as desired.

We now turn to constructing the gadget $G_\alpha$ for a fixed coloring $\alpha\in [q]^k$. We first reduce to the case in which each $b_i$ has $\{1,2\}$ as list. Let $i\in [k]$. Using Lemma \ref{lem:ColReduc} with $a=\alpha(i)$, we obtain a gadget $H_{b,b'}$ and identify the special vertex $b$ with $b_i$. For each $\alpha$, we obtain a new set of vertices $b_1',\dots,b_k'$ with lists $\{1,2\}$. We then glue these onto the special vertices from a gadget obtained by applying Lemma \ref{lem:ColAtom} with $\ell = f(\alpha)$. If $b_1,\dots,b_k$ are colored as specified by $\alpha$, then $b_1',\dots,b_k'$ all receive color 1 and $G_\alpha$ has $f(\alpha)$ possible extensions; however if some $b_i$ receives the wrong color, the corresponding $b_i'$ receives color $2$ and there is a unique extension to the rest of $G_\alpha$.
	
It remains to show the bounds on the number of vertices and the cutwidth. We give the very rough upperbound of $6kq^{k+2}$ on the cutwidth. The gadget from Lemma \ref{lem:ColReduc} has cutwidth at most $q^2+6$ (since this is an upper bound on the number of edges in that construction). The gadgets from Lemma \ref{lem:ColAtom} have cutwidth at most $k$. A final cut decomposition can be obtained by first enumerating the vertices in $B$, and then adding the cut decompositions of each $G_\alpha$, one after the other. 

Finally the number of vertices of the graph is upper bounded by $q^k$ times the maximum number of vertices of the graph $G_{\alpha}$. The gadget of Lemma \ref{lem:ColReduc} has at most $2q+6$ vertices and there are $k$ of them, so they contribute at most $12kq$ vertices. The gadgets from Lemma \ref{lem:ColAtom} add at most $f(\alpha)$ vertices. In total, 
	\[
	|V(G_f)| \leq 20kq^{k+1}\max(f). \qedhere
	\]
\end{proof}

\subsection{Remaining details of the proof of Theorem \ref{thm:ColRed}}
We first describe the gadgets for the `color transfer' (the first desired property).
Let $j\in [m-1]$ and $i\in [n]$. We will apply Theorem \ref{thm:ColGad} to a function $f_{i,j}$ with boundary set $B_{i,j}=(s_{i,j},t_{i,j})$ and $\max(f_{i,j})=p$, resulting in a graph on $O_{p,q}(1)$ vertices. Let $J_1$ be the $q\times q$  coloring compatibility matrix of a single edge, and let $J_{1}^{-1}$ denote its inverse over $\F_p$ (that is, $J_1^{-1}J_1\equiv_p I_q$, the $q\times q$ identity matrix). We `choose a representative' $\widetilde{J_1}$, which has entries in $\{1,\dots, p\}$ that are equivalent to those in $J_1^{-1}$ modulo $p$. 
For $c_1,c_2\in [q]$ possible colors for $s_{i,j}$ and $t_{i,j}$ respectively, we set
\[
f_{i,j}(c_1,c_2) = \widetilde{J_1}[c_1,c_2].
\]
Let $V_{i,j}$ denote the vertices in the gadget obtained by applying Theorem \ref{thm:ColGad} to $(f_{i,j},B_{i,j})$ that are not in $B_{i,j}$. Let $c_1,c_3\in [q]$. The number of list colorings  $c$ of the graph induced on $B_{i,j}\cup V_{i,j}\cup \{s_{i,j+1}\}$ with $c(s_{i,j})=c_1$ and $c(s_{i,j+1})=c_3$ is equal to
\[
\sum_{c_2\in [q]}f_{i,j}(c_1,c_2) J_1[c_2,c_3] =(\widetilde{J_1}J_1)[c_1,c_3],
\]
since for any coloring $c_2$ we have $f_{i,j}(c_1,c_2) J_1[c_2,c_3]$ such colorings with $c(t_{i,j}) = c_2$, by definition of $f_i,j$ and $J_1$. Therefore, modulo $p$ this number of extensions is equal to $1$ if $c_1=c_3$ and $0$ otherwise, as desired.

We now describe the gadgets that check the constraints. Let $j\in [m]$ and let $i_1,\dots,i_\ell$ be given so that the $j$th constraint only depends on the variables $x_{i_1},\dots,x_{i_\ell}$ (where by assumption $\ell\leq r$). We will apply Theorem \ref{thm:ColGad} to a function $g_{j}$ with boundary set $B_{j}=(s_{i_1,j},\dots,s_{i_\ell ,j})$ and $\max(g_{j})=p$, resulting in a graph on $O_{p,q,\ell}(1)=O_{p,q,r}(1)$ vertices. We set $g_j(c_{1},\dots,c_{\ell})$ to be equal to 1 if the assignment $(c_1,\dots,c_{\ell})$ to $(x_{i_1},\dots,x_{i_\ell})$ satisfies the $j$th constraint, and $p$ otherwise. This ensures the second property described in the proof of Theorem \ref{thm:ColRed}.

\subsection{Proof of Theorem \ref{thm:main_upperboundcse}, upper bound}
\label{sec:upperboundcse}
Let $p$ be a prime. In this section we show that there is an algorithm that counts the number of connected spanning edge sets modulo $p$ in an $n$-vertex graph of treewidth at most $\tw$ in time $p^{\tw}n^{O(1)}$. The algorithm uses dynamic programming on the tree decomposition, reminiscent of cut-and-count. We will first need some additional notions about tree decompositions.

Let $G$ be an $n$-vertex graph with a tree decomposition $(T,(B_x)_{x\in V(T)})$ such that $T$ is a rooted tree and $|B_x|\leq \tw+1$ for all $x\in V(T)$.  Since $T$ is rooted, we can consider the children and  descendants of $x$ in $T$.
In polynomial time, we can adjust the tree decomposition to an edge-introduce tree decomposition without increasing the width. For an  \emph{edge-introduce}  tree decomposition (called a nice tree decomposition in \cite{cygan2011solving}) all bags $B_x$ are of one of the following types:
\begin{itemize}
\item Leaf bag: $x$ is a leaf of $T$.
\item Introduce vertex bag: $x$ is an internal vertex of $T$ with a single child $y$ for which $B_x = B_y \cup \{v\}$ for some vertex $v\in V(G)\setminus B_y$.
\item Introduce edge bag: $x$ is an internal vertex of $T$ with a single child $y$ with $B_x=B_y$, and $x$ is moreover labelled by an edge $uv \in E(G)$ with $u,v\in B_y$. Every edge in $E$ is `introduced' (used as label) exactly once.
\item Forget bag: $x$ is an internal vertex of $T$ with a single child $y$ for which $B_x = B_y \setminus \{v\}$ for some $v\in B_y$.
\item Join bag: $x$ is an internal vertex of $T$ with two children $y$ and $z$ for which $B_x=B_y=B_z$.
\end{itemize} 
We write $G_x$ for the graph on the vertices in the bags that are a descendant of $x$ in $T$, which has as edge set those edges that have been `introduced' by some bag below $B_x$.

\begin{proof}[Proof of Theorem \ref{thm:main_upperboundcse}]
By the discussion above, we assume that an edge-introduce tree decomposition $(T,(B_x)_{x\in V(T)})$ for $G$ has been given.
For each $x \in V(T)$ and each partition $X_1,\dots,X_p$ of $B_x$, we define
\begin{align*}
T_x[X_1,\dots,X_p] =|\{(X,V_1,\dots,V_p)\::\:& V_1\sqcup \dots \sqcup V_p= V(G_x),\\
												&X\subseteq E(G_x) \setminus (\cup_{i<j} E(V_i,V_j)) \}|,
\end{align*}
where $\sqcup$ denotes the disjoint union (that is, we take the union and assume the sets are disjoint).
Thus, we count the number of edge sets $X$ of $G_x$ and number of vertex partitions $P$ of $G_x$ such that $X$ does not `cross' $P$ and $P$ `induces' the partition $(X_1,\dots,X_p)$ on $B_x$. Note that this is not done modulo $p$, but over the reals, and also that we do not put in the `connected spanning' constraints.

We claim that for $r$ the root of $T$, 
\[
\frac1p \sum_{R_1,\dots,R_p} T_r[R_1,\dots,R_p]
\]
equals the number of connected spanning edge sets of $G$ modulo $p$, where the sum is over the possible partitions of $B_r$. Note that $\sum_{R_1,\dots,R_p} T_r[R_1,\dots,R_p]$ counts the number of tuples $(X,V_1,\dots,V_p)$ such that $X$ is an edge set of $G$ that does not cross the partition $(V_1,\dots,V_p)$.  For an edge set $X$ of $G$, let $U_1,\dots, U_k$ denote the connected components (which may be single vertices) of $G[X]$, the graph with vertex set $V(G)$ and edge set $X$. The number of partitions $(V_1,\dots,V_p)$ of $G$ for which $X$ respects the partition equals $p^k$: we assign each connected component an element of $[p]$. Therefore,  the contribution of each $X$ to $\sum_{R_1,\dots,R_p} T_r[R_1,\dots,R_p]$ is divisible by $p$, and is divisible by $p^2$ if and only if $G[X]$ is not connected. This proves the claim since $G[X]$ is connected if and only if $X$ is connected and spanning.

What remains to show is that we can calculate the table entries in the claimed time complexity. If $x \in V(T)$ is a leaf, then we can calculate the table entry naively. We now assume that all strict descendants of $x$ already have their table entry calculated. We consider two cases.

Suppose first that $B_x$ is a `join bag': $x$ has two children $y$ and $z$ with $B_x=B_y=B_z$, and $E(G_y)\cap E(G_y)=\emptyset$. For any partition $(X_1,\dots,X_p)$ of $B_x$, we can also consider this as a partition of the bags $B_y$ and $B_z$. We claim that
\[
T_x[X_1,\dots,X_p] = T_y[X_1,\dots,X_p]\cdot T_z[X_1,\dots,X_p].
\]
This follows from the fact that any $(V_1,\dots,V_p,X)$ counted for $T_x[X_1,\dots,X_p]$, is determined by the following parts: the partition $X_1,\dots,X_p$ of $B_x$ (which is `fixed'), the remaining partition of the vertices in 
\[
V(G_x)\setminus V(B_x)=(V(G_y)\setminus V(B_x))\sqcup (V(G_z)\setminus V(B_x))
\]
(where we obtain a disjoint union by the standard tree decomposition properties) and the edge set $X\subseteq E(G_x)=E(G_y)\sqcup E(G_z)$ (where we find a disjoint union by our additional `edge-introduce' property).

Suppose now that $x$ has a single child $y$. Fix a partition $P_x=(X_1,\dots,X_p)$ for $B_x$ and consider a partition $P_{y}=(Y_1,\dots,Y_p)$ of $B_y$ that agrees with the partition $P_x$, in the sense that $X_i\cap Y_j=\emptyset$ for $i\neq j$. Note that since $|B_y \setminus B_x|\leq 1$, there are at most $p$ partitions $P_y$ that agree with $P_x$. We can therefore efficiently compute 
\begin{equation}
\label{eq:cseupp}
C \sum_{P_{y}:P_{y}\sim P_x} T_{y}[Y_1,\dots,Y_p]
\end{equation}
where the sum is over those $P_{y}=(Y_1,\dots,Y_p)$ that agree with the partition $P_x=(X_1,\dots,X_p)$, and where $C=C(X_1,\dots,X_p,y)$ is given by the number of possible subsets of $E(G_x)\setminus E(G_{y})$ that respect the partition $P_x$ (always counting the empty set). We will now show that (\ref{eq:cseupp}) equals $T_x[X_1,\dots,X_p]$.

When we fix a partition $(V_1,\dots,V_p)$ of $V(G_x)$ agreeing with $P_x$, then this uniquely defines a partition $P_y$ agreeing with $P_x$, namely $Y_i=V_i\cap B_y$ for all $i\in [p]$. Similarly, any partition $(V'_1,\dots,V_p')$ of $V(G_y)$ agreeing with $P_y$ (for some $P_y$ agreeing with $P_x$), uniquely defines a partition $(V_1,\dots,V_p)$ of $V(G_x)$ that agrees with $P_x$ (where we may need to add a vertex from $B_x\setminus B_y$ in the place specified by $P_x$). 

If $x$ is not introducing an edge, then $E(G_x)=E(G_y)$. Therefore, there is a one-to-one correspondence between tuples $(V_1,\dots,V_p,X)$ counted for $T_x$ and tuples $(P_y,V_1\cap B_y,\dots,V_p\cap B_y,X)$ counted for  (\ref{eq:cseupp}). Since in this situation $C=1$,  (\ref{eq:cseupp}) indeed equals $T_x[X_1,\dots,X_p]$.

If $x$ is introducing the edge $uv\in E(G_x)\setminus E(G_y)$ and $uv$ respects the partition $P_x$, then there is a two-to-one correspondence between tuples $(V_1,\dots,V_p,X')$ and $(V_1,\dots,V_p,X'\cup \{uv\})$ counted for $T_x$ and tuples $(P_y,V_1\cap B_y,\dots,V_p\cap B_y,X')$ counted for  (\ref{eq:cseupp}), where $X'\subseteq E(G_y)$. In this case $C=2$. If $uv$ does not respect the partition, there is a one-to-one correspondence again and $C=1$. This proves that  (\ref{eq:cseupp}) equals $T_x[X_1,\dots,X_p]$ in all cases.

The number of partitions $(X_1,\dots,X_p)$ of a bag $B_x$ is at most $p^{|B_x|}$ (assign each vertex of $B$ to an element of $[p]$). The computation of a single table entry is always done in time polynomial in $n$, and the size of the tree is also polynomial in $n$. Therefore, the running time of this algorithm is at most $p^{\tw}n^{O(1)}$.
\end{proof}
\end{document}